\newtheorem{theorem}{Theorem}
\newcounter{test}
\newtheorem*{problem0}{Problem P$_0$}
\newtheorem*{problemN}{Problem P$_N$}
\newtheorem*{problemP}{Problem P$_N^r$}
\newtheorem*{problemPP}{Problem Q$_N^r$}
\newtheorem*{problemPPP}{Problem Q$_{N,\alpha,\omega}^r$}
\journal{Computers and Mathematics with Applications}
\begin{document}
\begin{CJK*}{GBK}{kai}
\begin{frontmatter}



\title{ Optimal Boundary  Control for Water Hammer Suppression  in Fluid Transmission Pipelines  \footnote{This work was partially supported by the National Natural Science Foundation of China through grants F030119-61104048, 2012AA041701 and 61320106009.}}


\author{
Tehuan Chen$^a$, Chao Xu$^{a}$\footnote{Correspondence to: Chao Xu, Email: cxu@zju.edu.cn}, Zhigang Ren$^a$, Ryan Loxton$^b$}
\address{$^a$ State Key Laboratory of Industrial Control Technology and Institute of Cyber-Systems \& Control, Zhejiang University, Hangzhou, Zhejiang 310027, China.\\
$^b$ Department of Mathematics \& Statistics, Curtin University, Perth, Western Australia 6845, Australia.
}


\begin{abstract}
When  fluid flow in a pipeline is suddenly halted, a pressure surge or wave is created within the pipeline. This phenomenon, called
water hammer, can cause major damage to pipelines, including pipeline ruptures.
In this paper, we model the problem of mitigating water hammer during valve closure by an optimal boundary control problem
involving a nonlinear hyperbolic PDE system that describes  the fluid flow along the pipeline. The control variable in this system represents the valve boundary actuation implemented  at the pipeline terminus. To solve the boundary control problem, we first use {the method of lines}
to  obtain a finite-dimensional ODE model based on the original PDE system. Then, for the boundary control design, we apply the control parameterization method to obtain an approximate optimal parameter selection problem that can be solved
using nonlinear optimization techniques such as Sequential Quadratic Programming (SQP). We conclude the paper with simulation
results  demonstrating the capability of  optimal boundary control to significantly reduce flow fluctuation.

\end{abstract}

\begin{keyword}
Water hammer, Optimal boundary  control, {Method of lines}, Hyperbolic partial differential equation, Control parameterization method
\end{keyword}

\end{frontmatter}

\section{Introduction}
%

Water hammer occurs   when  fluid moving through a pipeline  is forced to suddenly stop or change direction. This sudden change in motion, which could be due to valve closure, pump failure, or unexpected pipeline damage, causes  a pressure wave to propagate along the pipeline at high speed \cite{Anton1991, sciamarella2009water}.
The  wave speed  can be over 1000m/s, with significant pressure oscillation,
often causing loud noises and serious damage \cite{asli2010some}. In severe cases, water hammer may even cause the  pipeline to rupture,
resulting in slurry and water leakage (examples of pipeline rupture are shown in Figure \ref{water hammer damage}) \cite{erath1999modelling}.
Fluid pipeline failures due to water hammer effects are described in detail  in {\cite{schmitt2006water,jallouf2011probabilistic}}.

\begin{figure}
\centering\includegraphics[scale=0.7]{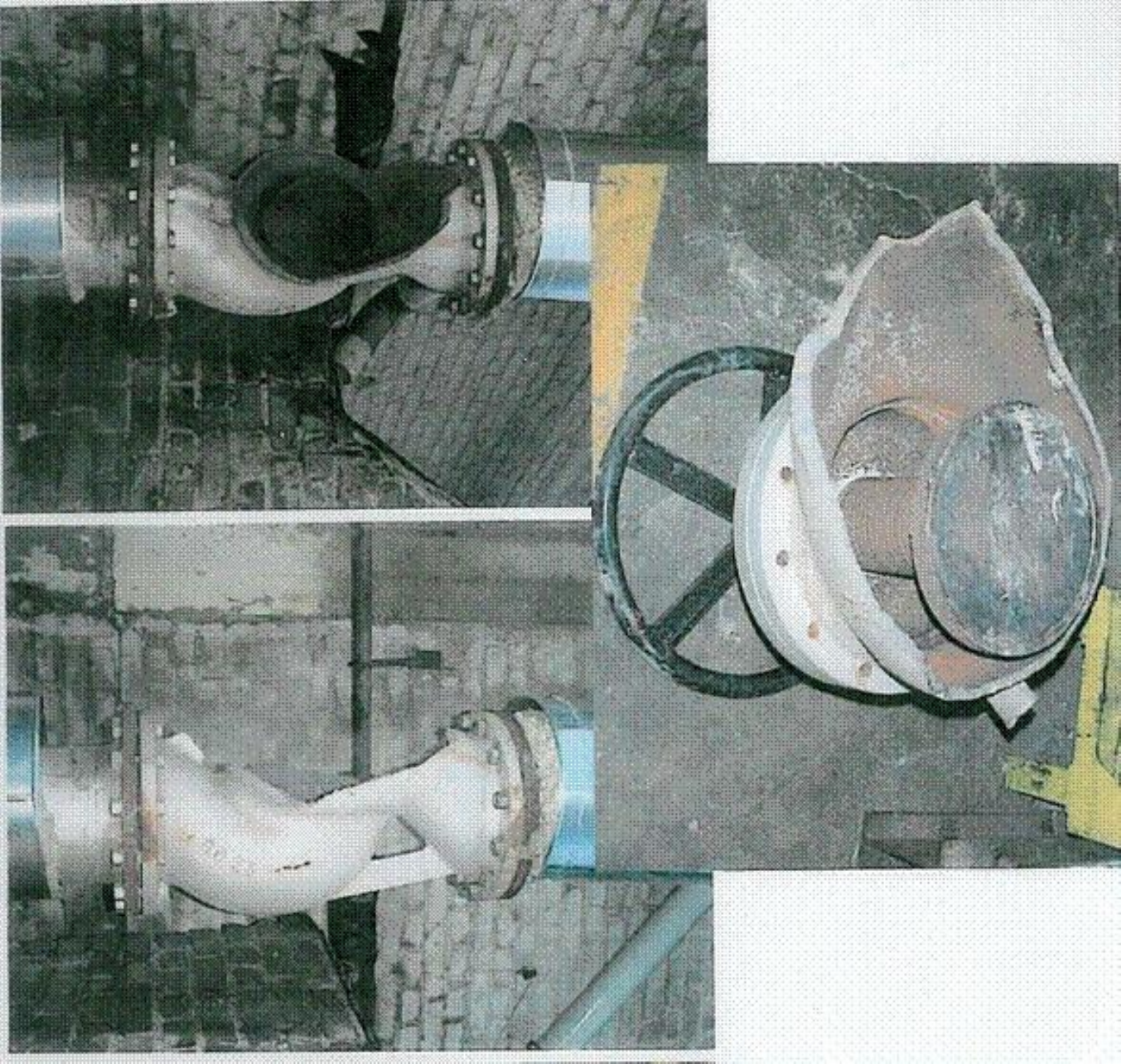}
\caption{Examples of pipeline damage caused by water hammer
(Source: http://traction.armintl.com/traction$\#$/single$\&$ proj=Docs$\&$rec=403$\&$brief=n)}
\label{water hammer damage}
\end{figure}

The mathematical equations describing water hammer   consist of  hyperbolic or parabolic partial differential equations. Numerous  methods for solving these equations, and thereby simulating water hammer,  have been developed over the past forty years. These methods  can be divided into three groups: analytical methods \cite{Cai1990}, graphical methods \cite{saikia2006simulation} and numerical methods \cite{Pierre2010}. The graphical and analytical methods are only applicable under various simplifying assumptions, and thus their value is limited in practical scenarios. In particular, the graphical and analytical methods cannot deal with the cavitation caused by negative pressure \cite{suppressing2013}.
Numerical methods for simulating
water hammer  include the fluid-structure interaction method \cite{lavooij1991fluid}, the method of characteristics {\cite{vardy1991characteristics,bouaziz2014water}}, {the heterogenous multiscale method \cite{blavzivc2014application}}, the finite volume method   \cite{zhao2004godunov}, and the wave plan method \cite{ghidaoui2005review}. In this paper, we apply {the method of lines \cite{schiesser1991numerical,schiesser2009compendium}}  to approximate the water hammer PDEs by a system of ODEs. This approach  enables the application of ODE optimal control techniques, for which there are many existing high-quality numerical algorithms, to determine optimal valve closure strategies to mitigate water hammer.

To protect a pipeline system from water hammer effects, various passive protection strategies can be employed.
These include using special materials to reinforce the pipeline and installing special devices such as relief valves, air chambers, and surge tanks   \cite{Marian2004}. However, the success of these strategies depends heavily
 on the characteristics of the  pipeline system and on the experience of the designer/operator \cite{jung2003optimum}. Moreover, although passive protection strategies can act as a guard  against water hammer, it is usually better to try and prevent water hammer from occurring in the first place. Hence, effective  control strategies for valve closure  are  required to avoid the worst effects of water hammer, such as hazardous pipeline collapse.

The  water hammer process involves nonlinearities and is non-uniform in space and time. Therefore, optimal flow control requires a forecasting model capable of predicting the non-uniform and unsteady water flow in
space and time.
Furthermore, due to flow nonlinearities, it is difficult to establish the relationship
between the control action and the corresponding response in the
hydrodynamic variables. Thus, effective valve control strategies are  essential. Cao \cite{cao2008analytic} used functional extremum theory and the   Ritz method to design optimal rules   for both velocity change and  valve closure to minimize
the peak pressure at the valve. {Axworthy} \cite{axworthy2000valve} developed  a valve closure algorithm for node-based, graph-theoretic models that can be applied within a slow transient (rigid water column)
pipeline network. {Tian} \cite{tian2008numerical} investigated  the optimum design of  parallel pump feedwater systems in nuclear power plants  to mitigate the potential damage caused by valve-induced
water hammer. Feng  \cite{wei_min2003research} proposed an optimal control method for the regulation of multiple valves, focusing  on the active  causes of water hammer.  Now,
with the rapid development of modern control theory and numerical
methodologies, 
advances in nonlinear optimization have
made the solution of nonlinear flow control problems possible. Accordingly, in this paper, we propose an effective numerical approach to determine optimal boundary controls for valve closure in fluid pipelines.

The paper is organized as follows. In Section \ref{Mathematical}, we introduce a hyperbolic PDE system to describe the fluid flow dynamics in the pipeline, after which we propose an optimal control problem for water hammer suppression during valve closure. In Section~\ref{Discrete Model_1}, we use the  {method of lines} to approximate the hyperbolic PDE system by a  non-stationary  state space ODE model. Then, in  Section \ref{method}, we use the control parameterization method, with both  piecewise-linear and  {piecewise-quadratic} basis functions,  to solve the optimal control problem by designing the boundary controller to minimize pressure fluctuation. Finally, in Section \ref{simuation}, we give numerical results to demonstrate the superiority of the optimal boundary control strategy compared with the non-optimal (but widely-used) strategy of abruptly shutting off the valve.
\section{ Problem Formulation} \label{Mathematical}
\subsection{Mathematical Model} \label{Mathematical Model}
We consider the situation shown in Figure \ref{pipeline}, where a pipeline of length $L$ is used to transport fluid from a reservoir to a terminus connected to a larger pipeline network. Let $l\in[0,L]$ denote the spatial variable along the pipeline, and let $t\in[0,T]$ denote the time variable.
By neglecting the effects of   viscosity, turbulence, and temperature variation,
the flow along the pipeline can be described by the following hyperbolic PDE system  {\cite{Wylie1993, ghidaoui2004fundamental, blavzivc2004simple}},
which consists of a momentum equation and a continuity equation:
\begin{subequations}\label{system1}
\begin{align}
&\frac{{\partial v(l,t)}}{{\partial t}} =  - \frac{1}{\rho}\frac{{\partial p(l,t)}}{{\partial l}} - \frac{{f v(l,t) \left| {v(l,t)} \right| }}{{2D}},
\label{system:1}\\
&\frac{{\partial p(l,t)}}{{\partial t}} =  - {\rho c^2 }\frac{{\partial v(l,t)}}{{\partial l}}, \label{system:2}
\end{align}
\end{subequations}
where $v$ is the
flow velocity, $p$ is the edge pressure drop, $D$ is the diameter
 of the pipeline, $c$ is the wave velocity, $f$ is the Darcy-Weisbach friction factor and $\rho$ is the flow density.\\
The boundary conditions for system (\ref{system1}) are 
\begin{figure}
\centering\includegraphics[scale=0.9]{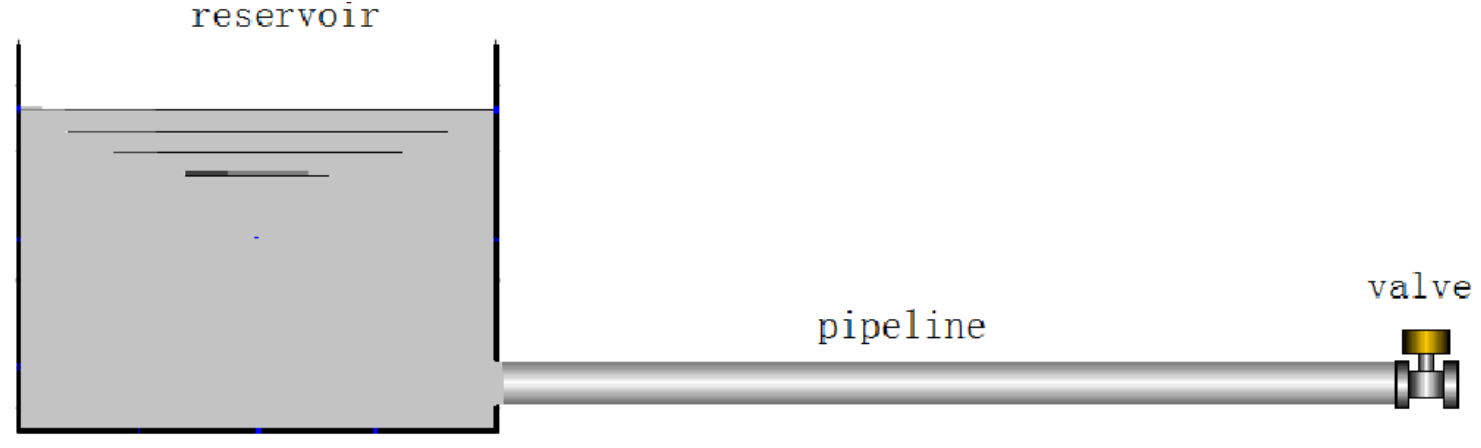}
\caption{General layout of the pipeline system described in Section \ref{Mathematical Model}}
\label{pipeline}
\end{figure}
\begin{equation}
p(0,t) =P,~~{v(L,t) = u (t)},~~t\in[0,T],\label{boundary condition}
\end{equation}
where $P$ is the   pressure generated by the reservoir (a given constant),  and $u(t)$ is a boundary control variable that models actuation from a valve situated at the pipeline terminus. Our interest is in modeling  the  fluid flow during the valve closure period, which begins at $t=0$ and ends at $t=T$. The boundary control, which must be  manipulated to implement the valve closure, is required to satisfy  the following bound constraint:
\begin{equation} \label{cons1}
0 \leq u(t)  \leq u_{\max}, ~~t\in[0,T],
\end{equation}
where $u_{\max}$ denotes the  maximum  velocity. The rate of change in the boundary control is also subject to lower and upper bounds:
\begin{equation} \label{constraints1}
-\dot{u}_{\max} \leq \dot{u}(t)  \leq \dot{u}_{\max}, ~~t\in[0,T],
\end{equation}
where $\dot{u}_{\max}$ is a given constant.  Since we require the valve to be completely closed at the terminal time,
\begin{equation}\label{control constrain1}
u(T) = 0.
\end{equation}
Any continuous function $u:[0,T] \to \mathbb{R}$ that is differentiable almost everywhere and satisfies (\ref{cons1})-(\ref{control constrain1}) is called an admissible boundary control policy.

The initial conditions for system (\ref{system1}) are
\begin{equation}
p(l,0) = \overline{p}_0 (l),~~v(l,0) = \overline{v}_0 (l),~~l\in[0,L], \label{initial conditions}
\end{equation}
where $\overline{p}_0 (l)$ and $\overline{v}_0 (l)$ are given functions describing the initial state of the pipeline.
\subsection{The Optimal Boundary Control Problem } \label{Discrete Model}
Since closing the valve suddenly could cause severe water hammer effects, the boundary control $u(t)$ must be manipulated carefully to minimize pressure fluctuation. To this end, we  consider the following objective function as proposed in references \cite{atanov1998variational,ding2006optimal}:
\begin{equation}
\begin{aligned}
J &= (p(L,T) - \hat p(L))^{2\gamma}  + \frac{1}{T}\int_0^T {(p(L,t) - \hat p(L))^{2\gamma} } dt \\
& \quad  + \frac{1}{{LT}}\int_0^T {\int_0^L {(p(l,t) - \hat p(l))^{2\gamma} dxdt} }, \label{Objective funtion}
\end{aligned}
\end{equation}
where $\gamma$ is a positive integer and $\hat p(l)$ is a given function expressing the target pressure profile along the pipeline. The objective
function (\ref{Objective funtion}) penalizes deviation between the actual  pressure in the pipeline and the target pressure profile: the first term in (\ref{Objective funtion}) penalizes pressure deviation at the valve at the terminal time, the second term penalizes  pressure deviation at the valve across the entire time horizon, and the third term penalizes global pressure deviation over the entire pipeline length and  time horizon. The reason for placing special emphasis in (\ref{Objective funtion}) at the valve location is that the valve will normally contain sensitive electrical components that must be protected. Our optimal boundary control problem is now defined as follows.
\begin{problem0}
\textit{Given the system (\ref{system1}) with boundary conditions (\ref{boundary condition}) and initial conditions (\ref{initial conditions}), choose the boundary  control $u:[0,T] \to \mathbb{R}$ to minimize the objective function (\ref{Objective funtion})  subject to the bound constraints  (\ref{cons1}) and (\ref{constraints1}) and the terminal  control constraint (\ref{control constrain1}).}
\end{problem0}
\section{Spatial Discretization} \label{Discrete Model_1}
To simplify Problem P$_0$, we will use the { method of lines} to approximate the PDE model by a state space ODE model. First, we
decompose the pipeline into equally-spaced intervals $\left[ {l_{i-1}  ,l_{i} } \right],i = 1,\dots ,N$, where $N$ is an even integer and $l_0  = 0$ and $l_{N}  = L$. Define
\begin{equation}\label{vdifferent}
\begin{aligned}
&v_i (t) = v(l_i,t), \quad i = 0,\dots,N,\nonumber
\end{aligned}
\end{equation}
and
\begin{equation}\label{pdifferent}
p_i (t) = p(l_i ,t),\quad i = 0,\dots,N.\nonumber
\end{equation}
These definitions, along with the spatial node points, are shown in Figure \ref{pipelinediffence}.
\begin{figure}
\centering\includegraphics[scale=0.6]{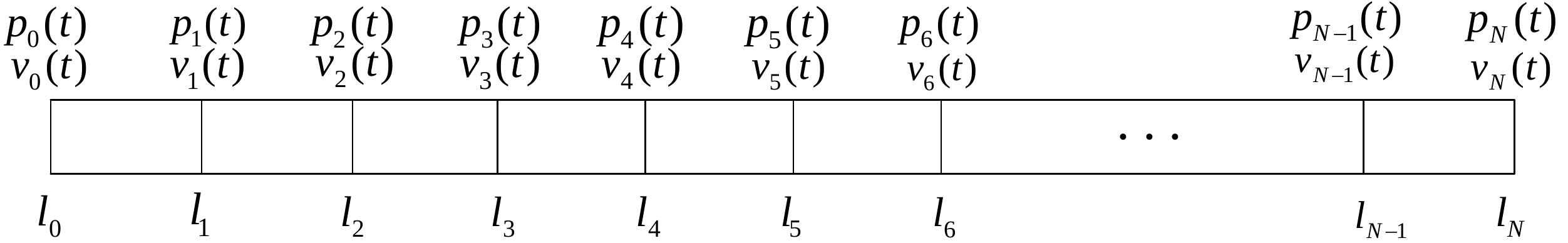}
\caption{Pipeline spatial discretization using the  {method of lines}}
\label{pipelinediffence}
\end{figure}

Based on the definitions of $v_i$ and $p_i$, we obtain the following finite difference approximations:
\begin{subequations}
\begin{align}
\hspace{-3mm}\frac{{\partial p(l_i,t)}}{{\partial l}}\hspace{-1mm} &= \hspace{-1mm}\frac{{{p_{i+1}}(t) - {p_{i }}(t)}}{\Delta l}, \quad i = 0,\ldots, N-1,\label{aproximate1}\\
\hspace{-3mm}\frac{{\partial v(l_{i},t)}}{{\partial l}} \hspace{-1mm} &= \hspace{-1mm} \frac{{{v_i}(t) - {v_{i - 1}}(t)}}{\Delta l} ,\quad i = 1,\ldots, N,\label{aproximate2}
\end{align}
\end{subequations}
where $\Delta l = L/N$. Substituting the finite difference approximations (\ref{aproximate1}) and (\ref{aproximate2}) into the PDE model (\ref{system:1}) and  (\ref{system:2}) yields
\begin{subequations}
\begin{align}
&\dot v_i (t) = \frac{1}{\rho \Delta l }(p_{i } (t) - p_{i+1} (t)) - \frac{{fv_i (t)  \left| {v_i (t)} \right| }}{{2D}}, \quad i = 0,\dots,N-1 , \label{pequation1}\\
&\dot p_i (t) = \frac{\rho c^2}{ \Delta l } (v_{i-1} (t) - v_{i } (t)), \quad i = 1,\dots,N . \label{pequation2}
\end{align}
\end{subequations}
By virtue of  the definitions of $v_i$ and $p_i$,  the boundary conditions (\ref{boundary condition}) become
\begin{equation}\label{system3}
p_0(t) = P,~ v_{N}(t)=u(t), ~~t\in[0,T].
\end{equation}
To simplify the notation, let
\begin{equation}
\begin{aligned}
\mathbf{x}(t) = \left[ {\begin{array}{*{20}c}
   {p_1 (t)} & {\cdots{}} & {p_{N } (t)} & {v_0 (t)} & {\cdots{}} & {v_{N-1 } (t)}  \\
\end{array}} \right]^T \in \mathbb{R}^{2N}, \nonumber\\
\end{aligned}
\end{equation}
\begin{equation}
\left| \mathbf{x} (t) \right|= \left[ {\begin{array}{*{20}c}
   {\left| p _1 (t)\right|} & {\cdots{}} & {\left| p _N (t)\right|} & {\left| v _0 (t)\right|} & {\cdots{}} & {\left| v _{N-1} (t)\right|}  \\
\end{array}} \right]^T \in \mathbb{R}^{2N}. \nonumber
\end{equation}
Then, by using the boundary conditions  (\ref{system3}), equations (\ref{pequation1}) and (\ref{pequation2}) can be written in compact form as
\begin{equation}
\mathbf{\dot x}(t)= A\mathbf{x}(t) +u(t)\mathbf{a} +P\mathbf{b} +B \mathbf{ x}(t)\circ\left| \mathbf{x}(t)\right|, \label{differentmodel}
\end{equation}
where $\circ$ denotes the Hadamard product, and \\
\begin{equation}
A = \left[ {\begin{array}{*{20}c}
   {0 } & {A_{12} }  \\
   {A_{21} } & {0 }  \\
\end{array}} \right] \in \mathbb{R}^{2N \times 2N}, \nonumber
\end{equation}
\begin{equation}
A_{12}  = \frac{{\rho c^2 }}{{\Delta l}} \left[ {\begin{array}{*{20}c}
   {1 } & { - 1} &  0&{\cdots{}} & 0 & 0  \\
   0 & {1 } & {-1}& {\cdots{}} & 0&0  \\
   0 &0 &1& {\cdots{}}  &  0& 0  \\
      {\vdots{}} & {\vdots{}} & {\vdots{}} & {\ddots } & {\vdots{}}& {\vdots{}}  \\
   0 & 0 & 0& {\cdots{}} & {1 } & { - 1 }  \\
   0 & 0 &0& {\cdots{}} & 0 & {1 }  \\
\end{array}} \right] \in \mathbb{R}^{N \times N},\nonumber
\end{equation}
\begin{equation}
A_{21}  = \frac{1}{{\rho \Delta l}}\left[ {\begin{array}{*{20}c}
   {-1 } & { 0} &  0&{\cdots{}} & 0 & 0  \\
   1 & -1 & 0& {\cdots{}} & 0&0  \\
   0 &1 &-1& {\cdots{}}  &  0& 0  \\
      {\vdots{}} & {\vdots{}} & {\vdots{}} & {\ddots} & {\vdots{}}& {\vdots{}}  \\
   0 & 0 & 0& {\cdots{}} & {-1 } & 0  \\
   0 & 0 &0& {\cdots{}} & 1 & {-1 }  \\
\end{array}} \right] \in \mathbb{R}^{N \times N},\nonumber
\end{equation}
\begin{equation}
\mathbf{a} =  [ {\begin{array}{*{20}c}
   0 & {\cdots{}} & {- \frac{{\rho c^2 }}{{\Delta l}}}& {0 } & {\cdots{}} & 0  \\
\end{array}}  ]^T \in \mathbb{R}^{ 2N},\nonumber
\end{equation}
\begin{equation}
\mathbf{b} = [ {\begin{array}{*{20}c}
   {0 } & {\cdots{}} & 0 &{ \frac{1}{{\rho \Delta l}} }   & {\cdots{}} & 0  \\
\end{array}}  ]^T \in \mathbb{R}^{ 2N},\nonumber
\end{equation}
\begin{equation}
B =-\frac{{f}}{{2D}} \left[ {\begin{array}{*{20}c}
   {0 } & {0}  \\
   {0 } & {I }  \\
\end{array}} \right]\in \mathbb{R}^{ 2N},\nonumber
\end{equation}
and $I$ is the $N\times N$ identity matrix. The initial conditions (\ref{initial conditions}) become
\begin{equation} \label{initial conditions2}
\begin{aligned}
\mathbf{x}(0) = \left[ {\begin{array}{*{20}c}
   \overline{p}_0 (l_1) & {\cdots{}} & \overline{p}_0 (l_N) & {\overline{v}_0 (l_0)} & {\cdots{}} & {\overline{v}_0 (l_{N-1})}  \\
\end{array}} \right]^T \in \mathbb{R}^{2N}. \\
\end{aligned}
\end{equation}
Furthermore, using  Simpson's rule \cite{gerald2003numerical}, the objective function (\ref{Objective funtion}) becomes
\begin{equation} \label{obj1}
\begin{aligned}
 J &= (x_N (T) - \hat p(L))^{2\gamma }  + \int_0^T \bigg \{ \frac{3N+1}{3NT} (x_N (t) - \hat p(L ))^{2\gamma }
 \\&\quad  + \frac{1}{3TN} ( P - \hat p(0 ))^{2\gamma }+
 \frac{4}{3TN}\sum\limits_{j = 1}^{{{N } \mathord{\left/ {\vphantom {{(N )} 2}} \right.
\kern-\nulldelimiterspace} 2}} {(x_{2j-1} (t) - \hat p(l_{2j-1} ))^{2\gamma } }
\\ &\quad  + \frac{2}{3TN}\sum\limits_{j = 1}^{{{N } \mathord{\left/
 {\vphantom {{(N) } 2}} \right.
 \kern-\nulldelimiterspace} 2}-1}{(x_{2j } (t) - \hat p(l_{2j } ))^{2\gamma } } \bigg\} dt.
\end{aligned}
\end{equation}
Note that the  term $( P - \hat p(0 ))^{2\gamma }$ in the integral is a constant. Hence, instead of minimizing (\ref{obj1}), we can equivalently  minimize the following modified objective function:
\begin{equation} \label{obj}
\begin{aligned}
 J &= (x_N (T) - \hat p(L))^{2\gamma }  + \int_0^T \bigg \{ \frac{3N+1}{3NT} (x_N (t) - \hat p(L ))^{2\gamma }
 \\&\quad  +
 \frac{4}{3TN}\sum\limits_{j = 1}^{{{N } \mathord{\left/ {\vphantom {{(N )} 2}} \right.
\kern-\nulldelimiterspace} 2}} {(x_{2j-1} (t) - \hat p(l_{2j-1} ))^{2\gamma } }  + \frac{2}{3TN}\sum\limits_{j = 1}^{{{N } \mathord{\left/
 {\vphantom {{(N) } 2}} \right.
 \kern-\nulldelimiterspace} 2}-1}{(x_{2j } (t) - \hat p(l_{2j } ))^{2\gamma } } \bigg\} dt.
\end{aligned}
\end{equation}
Our approximate problem is now stated as follows.
\begin{problemN}
\textit{ Given the system (\ref{differentmodel}) with initial condition (\ref{initial conditions2}), choose the optimal control $u:[0,T] \to \mathbb{R}$ to minimize the objective function (\ref{obj})  subject to the bound constraints (\ref{cons1}) and (\ref{constraints1})  and the terminal  control constraint (\ref{control constrain1}).}
\end{problemN}
\section{Control Parameterization } \label{method}
Problem P$_N$ is a conventional optimal control problem governed by ODEs. To solve Problem P$_N$ numerically, we will use the control parameterization
method \cite{Teo1991}, which involves approximating the control by a linear combination of basis functions, where the coefficients in the linear combination are decision variables to be optimized.  Then, by exploiting special formulae for the gradient of the objective function, the resulting approximate problem can be solved using
standard gradient-based optimization techniques \cite{ linsurvey2013}.

Control parameterization is normally applied with piecewise-constant basis functions \cite{ linsurvey2013}. However, piecewise-constant control approximation is not suitable for Problem~P$_N$ because the boundary controller in Problem~P$_N$ is required to be continuous. Thus, we instead develop two continuous approximation schemes: one with piecewise-linear basis functions, the other with piecewise-quadratic basis functions.
\subsection{Piecewise-Linear Control Parameterization}\label{Piecewise-Linear}
For piecewise-linear control parameterization \cite{lin2012optimal}, we approximate the derivative of the boundary  control    as follows:
\begin{equation}\label{signal001}
{\dot{u}(t) }\approx \sigma^k, \quad  t\in[t_{k-1}, t_k),\quad k = 1,\dots,r,
\end{equation}
where $r > 1$ is the number of approximation subintervals, $[t_{k-1}, t_k)$ is the $k$th approximation subinterval, and $\sigma^k$ is the rate of change of the control on the  $k$th subinterval. Moreover,  $t_k, ~k = 0, \dots, r,$ are fixed knot points such that
\begin{equation}
0 = t_0 < t_1 < t_2 < \dots < t_{r-1} < t_r = T.
\end{equation}
We can write equation (\ref{signal001}) as
\begin{equation}\label{approx001}
{\dot{u}(t)} \approx  \sum\limits_{k = 1}^r {\sigma ^k } \chi_{[t_{k - 1} ,t_k )}(t),
\end{equation}
where   $\chi _{[t_{k - 1} ,t_k )}(t) $ is the indicator function defined by
\begin{equation}
\chi_{[t_{k-1},t_k)}(t)=
\begin{cases}
1,&\text{if $t\in [t_{k-1},t_k)$},\\
0,&\text{otherwise}.
\end{cases}
\end{equation}
With $\dot{u}(t)$ approximated by a piecewise-constant function according to (\ref{approx001}), $u(t)$ is piecewise-linear with jumps in the derivative at $t=t_1,t_2,\ldots,t_{r-1}$. Let $x_{2N+1}(t)=u(t)$ be a new state variable. Then $x_{2N+1}(t)$ is governed by the following dynamics:
\begin{subequations}\label{new state}
\begin{align}
 &\dot x_{2N+1}(t) = \sum\limits_{k = 1}^r {{\sigma} ^k } \chi_{[t_{k - 1} ,t_k )}(t),\quad t \in [0,T],  \label{new state1}\\
 &x_{2N+1}(0)=u_0, \label{x2Ninitial}
 \end{align}
\end{subequations}
where $u_0=u_{\max}$ is the initial value of $u(t)$. In view of (\ref{cons1}),  we require the following continuous state inequality  constraint:
\begin{equation}\label{inequality constraints11}
0 \leq x_{2N+1}(t) \leq u_{\max }, \quad t \in [0,T].
\end{equation}
Clearly, since $x_{2N+1}(t)$ is piecewise-linear with break points at $t=t_1,t_2,\ldots,t_{r-1}$, this continuous state inequality  constraint is equivalent to the following constraints:
\begin{equation}\label{inequality constraints}
0 \leq x_{2N+1}(t_k ) \leq  u_{\max } ,\quad k = 1,\dots,r.\\
\end{equation}
Such constraints are known as canonical constraints in the optimal control literature \cite{linsurvey2013}.

Under the piecewise-linear control parameterization scheme (\ref{approx001}), the constraints (\ref{constraints1}) become
\begin{equation} \label{cons12}
-\dot{u}_{\max}  \le \sigma  ^k \le  \dot{u}_{\max} ,\quad k = 1,\dots,r.
\end{equation}
In addition, the dynamic system (\ref{differentmodel})  becomes
\begin{equation}\label{dynamicstrans}
\mathbf{\dot x}(t)= A\mathbf{x}(t) +x_{2N+1}(t)\mathbf{a} +P\mathbf{b} +B \mathbf{ x}(t)\circ\left| \mathbf{x}(t)\right|, \quad t\in[0, T].
\end{equation}
Furthermore,  the terminal control constraint (\ref{control constrain1}) becomes the following terminal state constraint:
\begin{equation}\label{g3cons}
x_{2N+1}(T) = 0.
\end{equation}
Our approximate problem is defined as follows.
\begin{problemP}
Given the system defined by (\ref{new state}), (\ref{dynamicstrans}), and (\ref{initial conditions2}), choose the control parameter vector
${\bm {\sigma} } = \left[ {\begin{array}{*{20}c}
   {{\sigma}} ^1   & {\cdots} & {{\sigma} ^{r} }
\end{array}} \right] \in \mathbb{R}^{r}$ to minimize the objective  function (\ref{obj}) subject to the bound constraints (\ref{cons12}) and the state constraints (\ref{inequality constraints}) and (\ref{g3cons}).
\end{problemP}
\subsection{Solving Problem P$_N^r$} \label{Psolving}
The approximate problem defined in Section \ref{Piecewise-Linear} is a  nonlinear optimization problem in which a finite number of decision variables need to be chosen to
minimize an objective function subject to a set of constraints. For this approximate problem, the
objective  function is an implicit---rather than explicit---function of the decision variables. Thus, computing the gradient of the objective function, as required to solve the approximate problem using gradient-based optimization methods such as SQP, is a non-trivial task. Nevertheless, we will now show that this gradient can be computed using the sensitivity approach described in \cite{loxton2008optimal,kaya2003computational}. Then, the SQP method can be applied  to generate search directions that lead to profitable areas of the search space \cite{luenberger2003linear}.

First, let $\mathbf{x}^r(\cdot|\mathbf{{{\bm\sigma}}})$ and ${x}^r_{2N+1}(\cdot|\mathbf{{{\bm\sigma}}})$ denote the solution of the enlarged system defined by (\ref{new state}), (\ref{dynamicstrans}), and (\ref{initial conditions2}) corresponding to the control parameter vector
${{\bm \sigma } = \left[ {\begin{array}{*{20}c}
   {{\sigma} ^1 } & {\cdots} & {{\sigma} ^{r} }  \\
\end{array}} \right]}$.  Then we have the following result.
\begin{theorem} For each $m=1,\dots,r$, the state variation of  ${x}^r_{2N+1}(\cdot|\mathbf{{{\bm\sigma}}})$ on the interval $[t_{m-1},t_m]$ is given by
\begin{equation}\label{prove1}
\frac{{\partial x_{2N + 1}^r (t|\mathbf{{{\bm\sigma}}})}}{{\partial {\sigma} ^k }} =\begin{cases}
   {t - t_{m - 1} },  &\text{if $k = m$},\\
   {t_k  - t_{k - 1} }, &\text{if $k < m$}, \\
   0,  &\text{if $k > m$}.
\end{cases}
\end{equation}
\end{theorem}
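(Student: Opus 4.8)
The plan is to exploit the fact that the governing equation (\ref{new state}) for the augmented state $x_{2N+1}$ is \emph{decoupled} from the remaining dynamics: its right-hand side depends only on $\bm{\sigma}$ and $t$, and not on $\mathbf{x}(t)$. Consequently the sensitivity of $x_{2N+1}^r$ with respect to each parameter $\sigma^k$ can be obtained by an explicit quadrature rather than by solving the full coupled variational system, and the proof reduces to an elementary integration of an indicator function.

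First I would introduce the sensitivity function $\phi_k(t):=\partial x_{2N+1}^r(t|\bm{\sigma})/\partial\sigma^k$ and differentiate the dynamics (\ref{new state1}) with respect to $\sigma^k$. Since the right-hand side $\sum_{j=1}^r \sigma^j\chi_{[t_{j-1},t_j)}(t)$ is linear in the parameters, this produces the simple variational equation $\dot{\phi}_k(t)=\chi_{[t_{k-1},t_k)}(t)$. The initial condition (\ref{x2Ninitial}) gives $\phi_k(0)=\partial u_0/\partial\sigma^k=0$, because $u_0=u_{\max}$ is a fixed constant independent of the decision vector.

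Next I would integrate this variational equation from $0$ to $t$, obtaining $\phi_k(t)=\int_0^t \chi_{[t_{k-1},t_k)}(s)\,ds$, which is precisely the length of the overlap $[0,t]\cap[t_{k-1},t_k)$. The claimed formula then follows from a case analysis for $t$ lying in the $m$th subinterval $[t_{m-1},t_m]$: when $k<m$ the subinterval $[t_{k-1},t_k)$ lies entirely to the left of $t$, so the overlap has length $t_k-t_{k-1}$; when $k=m$ the overlap is $[t_{m-1},t]$, with length $t-t_{m-1}$; and when $k>m$ the subinterval lies entirely to the right of $t$, so the overlap is empty and $\phi_k(t)=0$. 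Matching these three cases against (\ref{prove1}) completes the argument.

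I do not anticipate a serious obstacle, since the decoupling reduces the sensitivity computation to a direct integration. The only point requiring mild care is the bookkeeping of the indicator-function overlaps at the knot points $t_k$ and the consistent treatment of the half-open intervals; because $\phi_k$ is continuous (indeed $x_{2N+1}^r$ is continuous and piecewise linear), this boundary ambiguity at the $t_k$ does not affect the stated values, and evaluating the overlap lengths at the interval endpoints is unambiguous.
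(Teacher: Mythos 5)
Your proposal is correct, but it takes a different route from the paper. The paper proves the formula by induction on the subinterval index $m$: for $m=1$ it writes $x_{2N+1}^r(t|\bm{\sigma})=u_{\max}+\sigma^1 t$ explicitly and differentiates, and for the inductive step it uses $x_{2N+1}^r(t|\bm{\sigma})=x_{2N+1}^r(t_q|\bm{\sigma})+\sigma^{q+1}(t-t_q)$ on $[t_q,t_{q+1}]$, differentiates, and substitutes the inductive hypothesis. You instead differentiate the dynamics (\ref{new state1}) with respect to $\sigma^k$ to get the scalar variational equation $\dot{\phi}_k(t)=\chi_{[t_{k-1},t_k)}(t)$, $\phi_k(0)=0$, and then read off $\phi_k(t)$ as the length of the overlap $[0,t]\cap[t_{k-1},t_k)$, which collapses the whole proof into a single quadrature plus a three-way case split. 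Your argument is shorter and makes the geometric meaning of the answer transparent (the sensitivity is exactly an overlap length); it is rigorous here because the dynamics of $x_{2N+1}$ are decoupled from $\mathbf{x}(t)$ and affine in $\bm{\sigma}$, so the explicit solution $x_{2N+1}^r(t|\bm{\sigma})=u_0+\sum_j\sigma^j\int_0^t\chi_{[t_{j-1},t_j)}(s)\,ds$ justifies the interchange of parameter differentiation and time integration. The paper's induction buys something too: it stays entirely at the level of explicit piecewise formulas (no appeal to variational equations or differentiation under the integral), and the same inductive template is reused almost verbatim for the piecewise-quadratic case in Theorem 3, where the nested integration makes the direct-quadrature bookkeeping slightly heavier.
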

\begin{proof}
The proof is by induction. For $m=1$, it follows from (\ref{new state}) that
\begin{equation}
x_{2N + 1}^r (t|\mathbf{{{\bm\sigma}}})= u_{\max}+ {{\sigma}}^1(t-t_0)= u_{\max}+ {{\sigma}^1} t,\quad t\in[0,t_{1}].
\end{equation}
Then clearly, for all $t\in[0,t_{1}]$,
\begin{equation}
\frac{{\partial x_{2N + 1}^r (t|\mathbf{{{\bm\sigma}}})}}{{\partial {{\sigma}} ^k }}=
\begin{cases}
t ,&\text{if $k = 1$},\\
0,&\text{if $k > 1$},
\end{cases}
\end{equation}
which shows that (\ref{prove1}) is satisfied for $m=1$. Now, suppose that (\ref{prove1}) holds for $m=q$. Then for all $t\in[t_{q-1},t_{q}]$,
\begin{equation}
\frac{{\partial x_{2N + 1}^r (t|\mathbf{{{\bm\sigma}}})}}{{\partial {{\sigma}} ^k }} =\begin{cases}
   {t - t_{q - 1} },  &\text{if $k = q$},\\
   {t_k  - t_{k - 1} }, &\text{if $k < q$}, \\
   0,  &\text{if $k > q$}.
\end{cases}
\end{equation}
For $m=q+1$, equation (\ref{new state1}) implies
\begin{equation}
x_{2N + 1}^r (t|\mathbf{{{\bm\sigma}}})= x_{2N + 1}^r (t_q|\mathbf{{{\bm\sigma}}}) + {{\sigma}}^{q+1}(t-t_q),\quad t\in[t_q,t_{q+1}].
\end{equation}
Hence,  for all $t\in[t_q,t_{q+1}]$,
\begin{equation}
\frac{{\partial x_{2N + 1}^r (t|\mathbf{{{\bm\sigma}}})}}{{\partial {{\sigma}} ^k }} =\begin{cases}
   {t - t_{q} },  &\text{if $k = q+1$},\\
   \frac{{\partial x_{2N + 1}^r (t_q|\mathbf{{{\bm\sigma}}})}}{{\partial {{\sigma}} ^k }}, &\text{if $k < q+1$}, \\
   0,  &\text{if $k > q+1$}.
\end{cases}
\end{equation}
Applying the inductive hypothesis yields
\begin{equation}
\frac{{\partial x_{2N + 1}^r (t|\mathbf{{{\bm\sigma}}})}}{{\partial {{\sigma}} ^k }} =\begin{cases}
   {t - t_{q } },  &\text{if $k = q+1$},\\
   {t_k - t_{k-1 } }, &\text{if $k < q+1$}, \\
   0,  &\text{if $k > q+1$}.
\end{cases}
\end{equation}
This shows that (\ref{prove1}) holds for $m=q+1$. Thus, the result follows from {mathematical} induction.
\end{proof}
Clearly,
\begin{equation}\label{InitialSensitivity}
\frac{\partial {\mathbf{x}^r}(t|{{\bm \sigma}})}{\partial  {{\sigma}^k}}= \mathbf{0},\quad t \in [0,t_{k-1}].
\end{equation}
Moreover, for each $m=k,k+1,\ldots,r$,
\begin{equation}\label{Statechanggedform}
\begin{aligned}
{\mathbf{x}}^r(t|{{\bm \sigma}})&=\mathbf{x}^r(t_{m-1}|{{\bm \sigma}})+\int_{t_{m-1}}^t
  \Big \{  A\mathbf{x}^r(s|{{\bm \sigma}}) +\mathbf{a}x^r_{2N+1}(s|{{\bm \sigma}})
\\& \quad  +\mathbf{b}P +B \mathbf{x}^r(s|{{\bm \sigma}})\circ\left| \mathbf{x}^r(s|{{\bm \sigma}})\right|  \Big  \} ds,\quad t\in[t_{m-1},t_{m}].
\end{aligned}
\end{equation}
Differentiating (\ref{Statechanggedform}) with respect to ${{\sigma}}^k$ gives
\begin{equation}\label{diffStatechanggedform3}
\begin{aligned}
\frac{\partial {\mathbf{x}}^r(t|{{\bm \sigma}})}{\partial  {{\sigma}}^k}&= \frac{\partial \mathbf{x}^r(t_{m-1}|{{\bm \sigma}})}{\partial  {{\sigma}}^k}+\int_{t_{m-1}}^t \left\{A\frac{\partial \mathbf{x}^r(s|{{\bm \sigma}})}{\partial {{\sigma}}^k}+\mathbf{a}\frac{\partial x^r_{2N+1}(s|{{\bm \sigma}})}{\partial {{\sigma}}^k} \right.
\\ \quad & \left. + 2 B\left| \mathbf{x}^r(s|{{\bm \sigma}}\right)|\circ\frac{\partial \mathbf{x}^r(s|{{\bm \sigma}})}{\partial {{\sigma}}^k}\right \} ds,\quad t\in[t_{m-1},t_{m}),\quad m=k,k+1,\ldots,r.
\end{aligned}
\end{equation}
Thus, differentiating (\ref{diffStatechanggedform3}) with respect to time $t$, we obtain
\begin{equation}\label{sensitityform001}
\begin{aligned}
&\frac{d}{dt}\left\{\frac{\partial {\mathbf{x}^r}(t|{{\bm \sigma}})}{\partial  {{\sigma}}^k}\right\}= A\frac{\partial \mathbf{x}^r(t|{{\bm \sigma}})}{\partial {{\sigma}}^k}+\mathbf{a}\frac{\partial x^r_{2N+1}(t|{{\bm \sigma}})}{\partial {{\sigma}}^k}\\
&\quad+ 2B\left| \mathbf{x}^r(t|{{\bm \sigma}}\right)|\circ\frac{\partial \mathbf{x}^r(t|{{\bm \sigma}})}{\partial {{\sigma}}^k},\quad t\in[t_{m-1},t_{m}),\quad m=k,k+1,\ldots,r.
\end{aligned}
\end{equation}
Based on (\ref{InitialSensitivity}) and (\ref{sensitityform001}), we have the following result.
\begin{theorem} The state variation of $\mathbf{x}^r(\cdot|\mathbf{{{\bm\sigma}}})$ with respect to ${{\sigma}}^k$ is the solution $\Gamma^k(\cdot|\mathbf{{{\bm\sigma}}})$ of the following sensitivity system:
\begin{equation}\label{sensitityform01}
\begin{aligned}
\dot \Gamma^{k}(t)= A\Gamma^{k}(t)+\mathbf{a}\frac{{\partial x_{2N + 1}^r (t|\mathbf{{{\bm\sigma}}})}}{{\partial {{\sigma}} ^k }}+ 2 B \left| \mathbf{x}^r(t|{{\bm \sigma}}\right)|\circ\Gamma^{k}(t),\\ \quad t\in[t_{m-1},t_{m}),\quad m=k,k+1,\ldots,r,
\end{aligned}
\end{equation}
where $ \Gamma^{k}(t)=0$, $t\in[0,t_{k-1}) $ and $\frac{{\partial x_{2N + 1}^r (t|\mathbf{{{\bm\sigma}}})}}{{\partial {{\sigma}} ^k }}$ is given by the formula in  Theorem~1.
\end{theorem}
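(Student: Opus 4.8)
The plan is to recognize that Theorem~2 is essentially a repackaging of the differentiation already carried out in equations (\ref{InitialSensitivity})--(\ref{sensitityform001}): I would simply identify the state variation $\Gamma^k(t):=\partial\mathbf{x}^r(t|\bm\sigma)/\partial\sigma^k$ and read off both its governing ODE and its initial data. The crux is therefore not a long computation but the verification that this variation exists, is continuous, and solves the claimed system.

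First I would establish existence and smoothness of the variation. The right-hand side of the dynamics (\ref{dynamicstrans}) is affine in $\mathbf{x}$ except for the Hadamard term $B\,\mathbf{x}\circ|\mathbf{x}|$; componentwise this contributes $-\frac{f}{2D}x_i|x_i|$ in the velocity entries, and each scalar map $x\mapsto x|x|$ is continuously differentiable with derivative $2|x|$ (the one-sided derivatives agree at $x=0$). Hence the vector field is $C^1$ in both the state and the parameter $\bm\sigma$, and standard results on the smooth dependence of ODE solutions on parameters guarantee that $\Gamma^k$ exists and is continuous on $[0,T]$. This same computation is what produces the factor $2$ and the modulus $|\mathbf{x}^r|$, rather than a sign function, in the sensitivity equation.

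Next I would supply the initial condition and the dynamics on each subinterval. By Theorem~1 the control $x^r_{2N+1}$ is independent of $\sigma^k$ on $[0,t_{k-1}]$, so the trajectory is unaffected there and $\Gamma^k(t)=\mathbf{0}$ for $t\in[0,t_{k-1})$, which is exactly (\ref{InitialSensitivity}). On each interval $[t_{m-1},t_m)$ with $m\ge k$ I would differentiate the integral form (\ref{Statechanggedform}) with respect to $\sigma^k$---justified by the $C^1$ regularity, which permits differentiation under the integral sign---to obtain (\ref{diffStatechanggedform3}), and then differentiate in $t$ via the fundamental theorem of calculus to obtain (\ref{sensitityform001}). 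Substituting the definition of $\Gamma^k$ into (\ref{sensitityform001}) yields precisely (\ref{sensitityform01}), with the inhomogeneous term $\mathbf{a}\,\partial x^r_{2N+1}/\partial\sigma^k$ evaluated through the explicit formula of Theorem~1.

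Finally I would argue that these pieces glue into a single solution on $[0,T]$. Because $\mathbf{x}^r$ is continuous across the knots $t_1,\dots,t_{r-1}$ (only $\dot x_{2N+1}$ jumps there, not the state itself), the variation $\Gamma^k$ inherits continuity, so the piecewise ODE together with the initial value $\Gamma^k(t_{k-1})=\mathbf{0}$ determines $\Gamma^k$ uniquely. I expect the only genuine obstacle to be the smoothness justification of the first step: one must confirm that the piecewise-constant-in-time forcing $\dot x_{2N+1}$ does not spoil differentiability in $\sigma^k$. This is handled by noting that the discontinuities occur only at the fixed knot points and never in the state dependence, so parameter-sensitivity theory applies on each subinterval and the continuity of $\mathbf{x}^r$ closes the gap between them.
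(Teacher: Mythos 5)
Your proposal is correct and follows essentially the same route as the paper: the paper's ``proof'' of this theorem is precisely the derivation in equations (\ref{InitialSensitivity})--(\ref{sensitityform001}), namely that the variation vanishes on $[0,t_{k-1})$ and that differentiating the integral form (\ref{Statechanggedform}) first in $\sigma^k$ and then in $t$ yields the sensitivity ODE, which is exactly what you do. Your additional justifications---the $C^1$ smoothness of $x\mapsto x|x|$ with derivative $2|x|$ (explaining the factor $2B|\mathbf{x}^r|$), the appeal to parameter-dependence theory, and the continuity gluing across knot points---are refinements the paper leaves implicit rather than a different argument.
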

Clearly, the gradients of constraints (\ref{inequality constraints}) and (\ref{g3cons}) can be computed using Theorem~1.
For the objective function, the gradient can be obtained by differentiating (\ref{obj}) using the chain rule: 
\begin{equation}  \nonumber
\begin{aligned}
\frac{\partial J({{\bm \sigma}} )} {{\partial {{ \sigma}}^k }} &=2\gamma(x_N (T) - \hat p(L))^{2\gamma-1}\Gamma^{k}_{N}( T\left| \mathbf{ {{\bm \sigma}}}  \right.)
\\&\quad+  \int_0^{T} \bigg  \{ \frac{2\gamma(3N+1)}{3NT} {(x_N (t) - \hat p(L ))^{2\gamma-1 }}\Gamma^{k}_N(t\left| \mathbf{ {{\bm \sigma}}}  \right.)
 \\&\quad  + \frac{8\gamma}{3NT}\sum\limits_{j = 1}^{{{N } \mathord{\left/ {\vphantom {{(N )} 2}} \right.
\kern-\nulldelimiterspace} 2}} {(x_{2j-1} (t) - \hat p(l_{2j-1} ))^{2\gamma-1 } }\Gamma^{k}_{2j-1}(t\left| \mathbf{ {{\bm \sigma}}}  \right.)
  \\& \quad  + \frac{4\gamma}{3NT}\sum\limits_{j = 1}^{{{N } \mathord{\left/
 {\vphantom {{(N) } 2}} \right.
 \kern-\nulldelimiterspace} 2}-1} {(x_{2j } (t) - \hat p(l_{2j } ))^{2\gamma-1 }}\Gamma^{k}_{2j}(t\left| \mathbf{{{\bm \sigma}}}  \right.)  \bigg   \} dt.
 \end{aligned}
\end{equation}
By incorporating these gradient formulae  with a nonlinear programming algorithm such as SQP, Problem P$^r_N$ can be solved efficiently. The gradient-based optimization framework is illustrated in Figure \ref{SQLlc}. Convergence results showing that the solution of Problem~P$^r_N$ converges to the solution of Problem P$_N$ are derived in {\cite{Loxton20092250,loxton2012control,liu2014}}.
\begin {figure}
\begin{center}

\begin{tikzpicture}[scale=0.75]

\node at (0,0) [draw,red!40, rounded corners,fill=red!40, text width=8cm,     text centered, minimum height=0.8cm] {};

\node at (0,0) {{\footnotesize Choose initial guesses for the control parameters}};

\node at (0,-1.75) [draw,red!40, rounded corners,fill=red!40, text width=6cm,     text centered, minimum height=0.8cm] {};

\node at (0,-1.75) {{\footnotesize Solve the state and sensitivity systems}};

\node at (0,-3.5) [draw,red!40, rounded corners,fill=red!40, text width=7.5cm,     text centered, minimum height=0.8cm] {};

\node at (0,-3.5) {{\footnotesize Compute the objective and constraint gradients}};

\node at (0,-5.25) [draw,red!40, rounded corners,fill=red!40, text width=9.2cm,     text centered, minimum height=0.8cm] {};

\node at (0,-5.25) {{\footnotesize Use the gradient information to perform an optimality test}};

\node at (0,-7) [draw,green!40, rounded corners,fill=green!40, text width=1.4cm,     text centered, minimum height=0.8cm] {};

\node at (0,-7) {{\footnotesize Optimal?}};

\node at (0,-8.75) [draw,red!40, rounded corners,fill=red!40, text width=9.2cm,     text centered, minimum height=0.8cm] {};

\node at (0,-8.75) {{\footnotesize Use the gradient information to calculate a search direction}};

\node at (3.55,-7) [draw,green!40, rounded corners,fill=green!40, text width=1.4cm,     text centered, minimum height=0.8cm] {};

\node at (3.55,-7) {{\footnotesize Finish}};

\node[blue] at (0.4,-7.85) {{\footnotesize\emph{No}}};

\node[blue] at (1.75,-6.75) {{\footnotesize\emph{Yes}}};

\node at (0,-10.5) [draw,red!40, rounded corners,fill=red!40, text width=7.5cm,     text centered, minimum height=0.8cm] {};

\node at (0,-10.5) {{\footnotesize Perform a line search along the search direction}};

\node at (0,-12.25) [draw,red!40, rounded corners,fill=red!40, text width=6cm,     text centered, minimum height=0.8cm] {};

\node at (0,-12.25) {{\footnotesize Update the control parameter values}};

\draw[->,black,thick] (0,-0.54)--(0,-1.21);

\draw[->,black,thick] (0,-2.29)--(0,-2.96);

\draw[->,black,thick] (0,-4.04)--(0,-4.71);

\draw[->,black,thick] (0,-5.79)--(0,-6.46);

\draw[->,black,thick] (0,-7.54)--(0,-8.21);

\draw[->,black,thick] (1.12,-7)--(2.4,-7);

\draw[->,black,thick] (0,-9.29)--(0,-9.96);

\draw[->,black,thick] (0,-11.04)--(0,-11.71);

\draw[-,black,thick] (-4.185,-12.25)--(-8.25,-12.25);

\draw[-,black,thick] (-8.25,-12.25)--(-8.25,-1.75);

\draw[->,black,thick] (-8.25,-1.75)--(-4.185,-1.75);

\draw[-,white] (5.135,-12.25)--(7,-12.25);

\draw[-,white] (7,-12.25)--(7,-1.75);

\draw[->,white] (7,-1.75)--(5.135,-1.75);

\draw[dashed,thick,blue] (-7,-2.625) rectangle (7,-11.375);

\node[blue,rotate=90] at (7.5,-7) {\footnotesize Optimization algorithm};

\node[blue,rotate=90] at (8,-7) {\footnotesize (e.g., SQP)};

\end{tikzpicture}
\caption{Gradient-based optimization framework for solving  Problem P$^r_N$}
\label{SQLlc}
\end{center}
\end{figure}
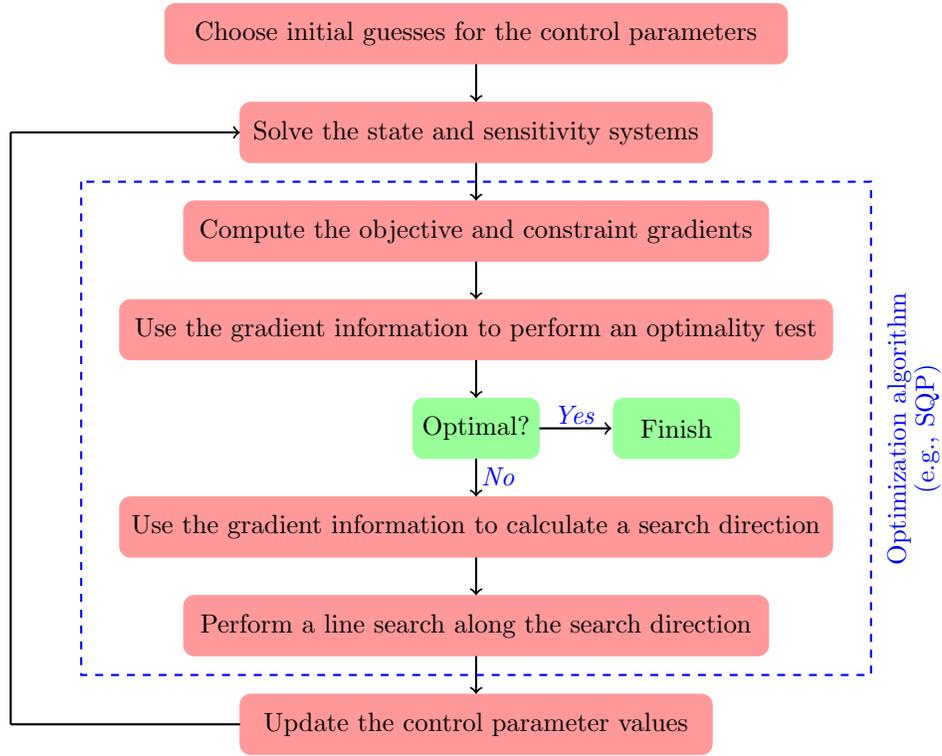
\subsection{Piecewise-Quadratic Control Parameterization}\label{Piecewise-Quardratic}
For piecewise-quadratic control parameterization, we approximate the second derivative of the control instead of the first derivative:
\begin{equation}
\ddot{{u}}(t) \approx \sum\limits_{k = 1}^r {\hat{{\sigma}} ^k } \chi_{[t_{k - 1} ,t_k )}(t),\quad t \in [0,T].
\end{equation}
Then, we introduce two new state variables $x_{2N+1}(t)$ and $x_{2N+2}(t)$  governed by the following dynamics:
\begin{subequations}\label{new state quardic}
\begin{align}
 \dot x_{2N+1}(t) &=  x_{2N+2}(t), \quad t \in [0,T], \\
  x_{2N+1}(0)&=u_{0},\\
 \dot x_{2N+2}(t) &= \sum\limits_{k = 1}^r {{\hat{\sigma}} ^k } \chi_{[t_{k - 1} ,t_k )}(t),\quad t \in [0,T],  \label{new state001}\\
 x_{2N+2}(0)&= \dot{u}_0,
 \end{align}
\end{subequations}
where $u_{0}$ and $\dot{u}_0$ are  given constants. Here, $x_{2N+2}(t)$ represents $\dot{u}(t)$ (a piecewise-linear function) and $x_{2N+1}(t)$ represents $u(t)$ (a piecewise-quadratic function). Thus, $u_0=u_{\max}$  (the valve is initially fully open) and $\dot{u}_0$ is the initial value of $\dot{u}(t)$. In view of (\ref{cons1}) and (\ref{constraints1}),  we have the following continuous state inequality  constraints:
\begin{equation}\label{constraints0011}
0 \leq x_{2N+1}(t) \leq u_{\max }, \quad t \in [0,T],
\end{equation}
and
\begin{equation}\label{inequality constraints0011}
-\dot{u}_{\max } \leq x_{2N+2}(t) \leq \dot{u}_{\max }, \quad t \in [0,T].
\end{equation}
Since $x_{2N+2}(t)$ is piecewise-linear, the continuous state inequality  constraint (\ref{inequality constraints0011}) is equivalent to:
\begin{equation}\label{inequality constraints2000}
-\dot{u}_{\max } \leq x_{2N+2}(t_k) \leq \dot{u}_{\max },\quad k = 1,\dots,r.\\
\end{equation}
After applying the piecewise-quadratic control parameterization scheme, the dynamic system (\ref{differentmodel})  becomes
\begin{equation}\label{dynamicstrans-quardratic}
\mathbf{\dot x}(t)= A\mathbf{x}(t) +x_{2N+1}(t)\mathbf{a} +P\mathbf{b} +B \mathbf{ x}(t)\circ\left| \mathbf{x}(t)\right|, \quad t\in[0,T].
\end{equation}
Furthermore,  the terminal control constraint (\ref{control constrain1}) becomes the following terminal state constraint:
\begin{equation}\label{g3cons-quardratic}
x_{2N+1}(T) = 0.
\end{equation}
Our approximate problem is defined as follows.
\begin{problemPP}
Given the system defined by (\ref{new state quardic}), (\ref{dynamicstrans-quardratic}), and (\ref{initial conditions2}), choose the control parameter vector
${\bm{\hat {\sigma}} } = \left[ {\begin{array}{*{20}c}
   {{\hat{\sigma}}} ^1   & {\cdots} & {{\hat{\sigma}} ^{r} }
\end{array}} \right] \in \mathbb{R}^{r}$ to minimize the objective  function (\ref{obj}) subject to the state constraints  (\ref{constraints0011}), (\ref{inequality constraints2000}) and (\ref{g3cons-quardratic}).
\end{problemPP}
\subsection{Solving Problem Q$_N^r$} \label{Qsolving}
Like Problem P$^r_N$, Problem Q$_N^r$ is a nonlinear optimization problem. The only significant difference is that Problem Q$_N^r$ contains a continuous inequality state constraint (\ref{constraints0011}) that cannot be converted into a finite number of conventional constraints. To address this difficulty, we first note that (\ref{constraints0011}) is equivalent to the following non-smooth integral constraints:
\begin{equation}\label{continuous in001}
\int_0^T  \max  \{-x_{2N+1}(t), 0\} dt=0, \quad  \int_0^T  \max \{x_{2N+1}(t)-{u}_{\max },0\} dt= 0.
\end{equation}
Since the $\max\{\cdot,0\}$ function is non-smooth, we use the following smooth approximation scheme defined in \cite{liu2014}:
\begin{equation}
\max \{y,0\}\approx \phi_{\alpha}(y)=\frac{1}{2}\sqrt{y^2+4\alpha^2}+\frac{1}{2}y,
\end{equation}
where $\alpha>0$ is a smoothing parameter. Note that $\phi_{\alpha}(y)\geq 0$ for all $y$.
Based on this  approximation scheme, we append constraints (\ref{continuous in001}) to the objective (\ref{obj}) to obtain the following penalty function:
\begin{equation}\label{obj-change}
\begin{aligned}
G_{\alpha,\omega}(\bm{\hat{\sigma}})&=J(\bm {\hat{\sigma}})+\omega\bigg\{\int_0^T \phi_{\alpha}(-x_{2N+1}(t)) dt
 +\int_0^T  \phi_{\alpha}(x_{2N+1}(t)-u_{\max}) dt\bigg\},\\
\end{aligned}
\end{equation}
where $\omega>0$ is a penalty parameter. We now define an approximation of Problem Q$_N^r$ as follows.
\begin{problemPPP}
Given the system defined by (\ref{new state quardic}), (\ref{dynamicstrans-quardratic}), and (\ref{initial conditions2}), choose the control parameter vector
${\bm{\hat {\sigma}} } = \left[ {\begin{array}{*{20}c}
   {{\hat{\sigma}}} ^1   & {\cdots} & {{\hat{\sigma}} ^{r} }
\end{array}} \right] \in \mathbb{R}^{r}$ to minimize the penalty  function (\ref{obj-change}) subject to the state constraints   (\ref{inequality constraints2000}) and (\ref{g3cons-quardratic}).
\end{problemPPP}
Note that when $\alpha$ is small, $\phi_{\alpha}(y)$ is a good approximation of $\max \{y,0\}$,  and thus Problem Q$_{N,\alpha,\omega}^r$ is a good approximation of Problem Q$_N^r$. Formal convergence results are given in \cite{liu2014}.

Let $\mathbf{x}^r(\cdot|\mathbf{{{\hat{\bm\sigma}}}})$, ${x}^r_{2N+1}(\cdot|\mathbf{{{\hat{\bm\sigma}}}})$, and ${x}^r_{2N+2}(\cdot|\mathbf{{{\hat{\bm\sigma}}}})$ denote the solution of the enlarged system defined by (\ref{new state quardic}), (\ref{dynamicstrans-quardratic}), and (\ref{initial conditions2}) corresponding to the control parameter vector
${\bm{\hat{ \sigma} } = \left[ {\begin{array}{*{20}c}
   {{\hat{\sigma}} ^1 } & {\cdots} & {{\hat{\sigma}} ^{r} }  \\
\end{array}} \right]}$.
Based on Theorem~1, for $t\in [t_{m-1},t_m]$,
\begin{equation}\label{BasedTheorem1}
\frac{{\partial x_{2N + 2}^r (t|\mathbf{{{\hat{\bm\sigma}}}})}}{{\partial \hat{{\sigma}} ^k }} =\begin{cases}
   {t - t_{m - 1} },  &\text{if $k = m$},\\
   {t_k  - t_{k - 1} }, &\text{if $k < m$}, \\
   0,  &\text{if $k > m$}.
\end{cases}
\end{equation}
The derivative of ${x_{2N + 1}^r (t|\mathbf{{\bm{\hat{\sigma}}}})}$ is given in the following theorem.
\begin{theorem} For each $m=1,\dots,r$, the state variation of  ${x}^r_{2N+1}(\cdot|\mathbf{{{\hat{\bm\sigma}}}})$ on the interval $[t_{m-1},t_m]$ is given by
\begin{equation}\label{prove03}
\frac{{\partial x_{2N + 1}^r (t|\mathbf{{\bm{\hat{\sigma}}}})}}{{\partial \hat{{\sigma}} ^k }} =\begin{cases}
   {\frac{1}{2}t^2-t_{m-1}t+\frac{1}{2}t_{m-1}^2 },  &\text{if $k = m$},\\
   {(t_k  - t_{k - 1})t+ \frac{1}{2} t_{k-1}^2-\frac{1}{2} t_{k}^2}, &\text{if $k < m$}, \\
   0,  &\text{if $k > m$}.
\end{cases}
\end{equation}
\end{theorem}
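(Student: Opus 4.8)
The plan is to exploit the fact that, under the piecewise-quadratic scheme, the two new state variables are coupled by the elementary relation $\dot{x}_{2N+1}^r = x_{2N+2}^r$ from (\ref{new state quardic}), together with the initial condition $x_{2N+1}^r(0)=u_0$, which is a fixed constant independent of $\hat{\bm\sigma}$. Integrating this relation gives $x_{2N+1}^r(t|\hat{\bm\sigma}) = u_0 + \int_0^t x_{2N+2}^r(s|\hat{\bm\sigma})\,ds$. Since $u_0$ does not depend on $\hat{\sigma}^k$, I would differentiate under the integral sign (justified by the smooth dependence of $x_{2N+2}^r$ on $\hat{\bm\sigma}$) to obtain $\frac{\partial x_{2N+1}^r(t|\hat{\bm\sigma})}{\partial \hat{\sigma}^k} = \int_0^t \frac{\partial x_{2N+2}^r(s|\hat{\bm\sigma})}{\partial \hat{\sigma}^k}\,ds$. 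In this way the entire statement reduces to integrating the already-established sensitivity (\ref{BasedTheorem1}) of $x_{2N+2}^r$ over time; unlike Theorem~1, no fresh induction is needed here, because the $x_{2N+2}^r$ sensitivity is already in closed form.

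The core step is then to evaluate this time integral for a fixed $t\in[t_{m-1},t_m]$ and a fixed index $k$, splitting the range $[0,t]$ at the knot points according to the three branches of (\ref{BasedTheorem1}). For the branch $k>m$ the integrand vanishes on all of $[0,t]$, since $s$ never enters an approximation subinterval of index $\ge k$; this yields the zero entry at once. For $k=m$ the integrand is zero on $[0,t_{m-1})$ and equals $s-t_{m-1}$ on $[t_{m-1},t]$, so I would compute $\int_{t_{m-1}}^t (s-t_{m-1})\,ds = \tfrac{1}{2}t^2 - t_{m-1}t + \tfrac{1}{2}t_{m-1}^2$, matching the stated formula.

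For the remaining branch $k<m$, I would split $[0,t]$ into $[0,t_{k-1})$, where the integrand is zero; $[t_{k-1},t_k)$, where it equals $s-t_{k-1}$; and $[t_k,t]$, where it has frozen to the constant $t_k-t_{k-1}$. The middle piece contributes $\tfrac{1}{2}(t_k-t_{k-1})^2$ and the last piece contributes $(t_k-t_{k-1})(t-t_k)$; combining and factoring out $(t_k-t_{k-1})$ then simplifies to $(t_k-t_{k-1})t + \tfrac{1}{2}t_{k-1}^2 - \tfrac{1}{2}t_k^2$, as claimed.

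The argument is essentially a single piecewise antiderivative computation, so I do not anticipate a serious obstacle. The one point requiring care is the correct placement of the split points relative to $t$ and to the index $k$ — in particular, recognising that once $s$ exceeds $t_k$ the sensitivity of $x_{2N+2}^r$ no longer grows but sits at the constant $t_k-t_{k-1}$, which is precisely what produces the term linear in $t$ in the $k<m$ case. I would note in passing that the result could instead be proved by induction on $m$ exactly as in Theorem~1, but the direct integration route is cleaner given that the $x_{2N+2}^r$ sensitivity is available explicitly.
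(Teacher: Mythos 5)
Your proof is correct, but it takes a genuinely different route from the paper's. The paper proves (\ref{prove03}) by induction on $m$, mirroring the proof of Theorem~1: the base case follows from $x_{2N+1}^r(t|\hat{\bm\sigma}) = u_0 + \int_0^{t} x_{2N+2}^r(s|\hat{\bm\sigma})\,ds$ on $[0,t_1]$, and the inductive step writes $x_{2N+1}^r(t|\hat{\bm\sigma}) = x_{2N+1}^r(t_q|\hat{\bm\sigma}) + \int_{t_q}^{t} x_{2N+2}^r(s|\hat{\bm\sigma})\,ds$ on $[t_q,t_{q+1}]$, differentiates, and handles the cases $k>q+1$, $k=q+1$, $k<q+1$ separately using (\ref{BasedTheorem1}) together with the inductive hypothesis. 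You instead collapse this knot-by-knot accumulation into a single computation: differentiate the global integral representation under the integral sign (legitimate here, since $x_{2N+2}^r$ depends affinely on $\hat{\bm\sigma}$, so smoothness is not in doubt), then evaluate $\int_0^t \frac{\partial x_{2N+2}^r(s|\hat{\bm\sigma})}{\partial \hat{\sigma}^k}\,ds$ by splitting $[0,t]$ at $t_{k-1}$ and $t_k$. Your three branch evaluations are all correct; in particular, for $k<m$,
\begin{equation*}
\tfrac{1}{2}(t_k-t_{k-1})^2 + (t_k-t_{k-1})(t-t_k) = (t_k-t_{k-1})t + \tfrac{1}{2}t_{k-1}^2 - \tfrac{1}{2}t_k^2,
\end{equation*}
which matches the stated formula, and the $k=m$ and $k>m$ branches are immediate. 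What your approach buys is brevity and transparency: the paper's induction is, in effect, this same piecewise antiderivative computed one subinterval at a time, and your version exhibits the closed form in one pass while making clear \emph{why} the $k<m$ entry is affine in $t$ (the integrand has frozen at the constant $t_k - t_{k-1}$ once $s > t_k$). What the paper's induction buys is structural uniformity with the proof of Theorem~1, so a single proof template serves both sensitivity results; it also sidesteps any explicit appeal to differentiation under the integral sign, since each inductive step only differentiates over a single subinterval on which the integrand's dependence on $\hat{\sigma}^k$ is already known in closed form.
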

\begin{proof}
The proof is by induction on $m$.
For $m=1$,
\begin{equation}
\begin{aligned}
{x_{2N + 1}^r (t|\mathbf{{\bm{\hat{\sigma}}}})}&=u_0+\int ^t_{0} { x_{2N + 2}^r (s|{{\bm{\hat{\sigma}}}})} ds, \quad t \in [0,t_1].
\end{aligned}
\end{equation}
Thus, using (\ref{BasedTheorem1}), for all $t \in [0,t_1]$,
\begin{equation}
\begin{aligned}
\frac{{\partial x_{2N + 1}^r (t|\mathbf{{{\hat{\bm\sigma}}}})}}{{\partial \hat{{\sigma}} ^k }}&=\int ^t_{0} \frac{{\partial x_{2N + 2}^r (s|\mathbf{{{\hat{\bm\sigma}}}})}}{{\partial \hat{{\sigma}} ^k }} ds=\begin{cases}
\frac {1}{2}t ^2,&\text{if $k = 1$},\\
0,&\text{if $k > 1$}.
\end{cases}
\end{aligned}
\end{equation}
This shows that (\ref{prove03}) is satisfied for $m=1$. Now, suppose that (\ref{prove03})  holds for $m=q$. Then for all $t\in[t_{q-1},t_{q}]$,
\begin{equation}\label{Then for all}
\frac{{\partial x_{2N + 1}^r (t|\mathbf{{\bm{\hat{\sigma}}}})}}{{\partial \hat{{\sigma}} ^k }} =\begin{cases}
   {\frac{1}{2}t^2-t_{q-1}t+\frac{1}{2}t_{q-1}^2 },  &\text{if $k = q$},\\
   {(t_k  - t_{k - 1})t+ \frac{1}{2} t_{k-1}^2-\frac{1}{2} t_{k}^2}, &\text{if $k < q$}, \\
   0,  &\text{if $k > q$}.
\end{cases}
\end{equation}
For $t\in[t_{q},t_{q+1}]$,
\begin{equation}\label{Different}
\begin{aligned}
{ x_{2N + 1}^r (t|\mathbf{{{\hat{\bm\sigma}}}})}&={{ x_{2N + 1}^r (t_{q}|\mathbf{{{\hat{\bm\sigma}}}})}}+\int ^t_{t_{q}} {{ x_{2N + 2}^r (s|\mathbf{{{\hat{\bm\sigma}}}})}} ds.
\end{aligned}
\end{equation}
Differentiating (\ref{Different}) with respect to ${\hat{{\sigma}}}^k$ gives
\begin{equation}\label{Differentwan}
\begin{aligned}
\frac{{\partial x_{2N + 1}^r (t|\mathbf{{{\hat{\bm\sigma}}}})}}{{\partial \hat{{\sigma}} ^k }}&=\frac{{\partial x_{2N + 1}^r (t_{q}|\mathbf{{{\hat{\bm\sigma}}}})}}{{\partial \hat{{\sigma}} ^k }}+\int ^t_{t_{q}} \frac{{\partial x_{2N + 2}^r (s|\mathbf{{{\hat{\bm\sigma}}}})}}{{\partial \hat{{\sigma}} ^k }} ds.
\end{aligned}
\end{equation}
Thus, if $k>q+1$, then clearly
\begin{equation}\label{k>q+1}
\frac{{\partial x_{2N + 1}^r (t|\mathbf{{{\hat{\bm\sigma}}}})}}{{\partial \hat{{\sigma}} ^k }}=0.
\end{equation}
If $k=q+1$, then by using  (\ref{BasedTheorem1}) and  (\ref{Then for all}) to  simplify (\ref{Differentwan}), we obtain
\begin{equation}\label{k=q+1}
\begin{aligned}
\frac{{\partial x_{2N + 1}^r (t|\mathbf{{{\hat{\bm\sigma}}}})}}{{\partial \hat{{\sigma}} ^k }}&=\int ^t_{t_{q}} (s-t_q) ds=\frac {1}{2}t^2-t_qt+\frac {1}{2}t^2_q.
\end{aligned}
\end{equation}
Finally, if $k<q+1$, then the inductive hypothesis (\ref{Then for all})  implies
\begin{equation*}
\begin{aligned}
\frac{{\partial x_{2N + 1}^r (t_q|\mathbf{{{\hat{\bm\sigma}}}})}}{{\partial \hat{{\sigma}} ^k }}&=\begin{cases}
\frac {1}{2}t_q^2-t_qt_{q-1}+\frac {1}{2}t^2_{q-1},&\text{if $k = q$},\\
(t_k-t_{k-1})t_q+\frac {1}{2}t^2_{k-1}-\frac{1}{2}t^2_k, &\text{if $k<q$},
\end{cases}
\\ \quad &=(t_k-t_{k-1})t_q+\frac {1}{2}t^2_{k-1}-\frac{1}{2}t^2_k.
\end{aligned}
\end{equation*}
Thus, (\ref{Differentwan}) becomes
\begin{equation}\label{k<q+1}
\begin{aligned}
\frac{{\partial x_{2N + 1}^r (t|\mathbf{{{\hat{\bm\sigma}}}})}}{{\partial \hat{{\sigma}} ^k }}&=  {(t_k  - t_{k - 1})t_q+ \frac{1}{2} t_{k-1}^2-\frac{1}{2} t_{k}^2}+\int ^t_{t_{q}} (t_k-t_{k-1})ds
\\&=(t_k-t_{k-1})t+ \frac{1}{2}t_{k-1}^2-\frac{1}{2}t_k^2 .
\end{aligned}
\end{equation}
 Equations (\ref{k>q+1}), (\ref{k=q+1}) and (\ref{k<q+1}) show that (\ref{prove03}) holds for $m=q+1$. Thus, the result follows from {mathematical} induction.
\end{proof}
The state variation of $\mathbf{x}^r(\cdot|\mathbf{{{\hat{\bm\sigma}}}})$ in Problem Q$^r_{N,\alpha,\omega}$ can be computed in the same manner as for Problem~P$_N^r$. This leads to the following theorem (see Theorem 2).
\begin{theorem} The state variation of $\mathbf{x}^r(\cdot|\mathbf{{{\hat{\bm\sigma}}}})$ with respect to ${{\hat{\sigma}}}^k$ is the solution $\Psi^k(\cdot|\mathbf{{{\hat{\bm\sigma}}}})$ of the following sensitivity system:
\begin{equation}\label{sensitityform02}
\begin{aligned}
\dot \Psi^{k}(t)= A\Psi^{k}(t)+\mathbf{a}\frac{{\partial x_{2N + 1}^r (t|\mathbf{{{\hat{\bm\sigma}}}})}}{{\partial {{\hat{\sigma}}} ^k }}+ 2 B \left| \mathbf{x}^r(t|{{\hat{\bm \sigma}}}\right)|\circ\Psi^{k}(t),\\ \quad t\in[t_{m-1},t_{m}),\quad m=k,k+1,\ldots,r,
\end{aligned}
\end{equation}
where $ \Psi^{k}(t)=0$, $t\in[0,t_{k-1}) $ and $\frac{{\partial x_{2N + 1}^r (t|\mathbf{{{\hat{\bm\sigma}}}})}}{{\partial {{\hat{\sigma}}} ^k }}$ is given by the formula in Theorem~3.
\end{theorem}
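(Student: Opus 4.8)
The plan is to mirror exactly the derivation that produced Theorem~2 for Problem~P$_N^r$, replacing $\sigma^k$ by $\hat\sigma^k$ and $\Gamma^k$ by $\Psi^k$ throughout, and invoking Theorem~3 in place of Theorem~1 for the forcing term. The decisive structural observation is that in the dynamics (\ref{dynamicstrans-quardratic}) the control enters \emph{only} through $x_{2N+1}^r(\cdot|\hat{\bm{\sigma}})$ via the vector $\mathbf{a}$; the auxiliary state $x_{2N+2}^r$ does not appear on the right-hand side of the physical equations. Hence the sensitivity of $\mathbf{x}^r$ with respect to $\hat\sigma^k$ is driven solely by $\partial x_{2N+1}^r/\partial\hat\sigma^k$, which is precisely the quantity supplied in closed form by Theorem~3.

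First I would show that $\Psi^k(t)=\mathbf{0}$ on $[0,t_{k-1})$. On this interval the indicator $\chi_{[t_{k-1},t_k)}$ vanishes, so by (\ref{new state001}) the state $x_{2N+2}^r$ is independent of $\hat\sigma^k$; integrating the first relation in (\ref{new state quardic}) shows $x_{2N+1}^r$ is likewise independent of $\hat\sigma^k$ there, and therefore so is $\mathbf{x}^r$. Differentiating with respect to $\hat\sigma^k$ then gives $\Psi^k(t)=\partial\mathbf{x}^r(t|\hat{\bm{\sigma}})/\partial\hat\sigma^k=\mathbf{0}$ for $t\in[0,t_{k-1})$, supplying the zero initial data required in (\ref{sensitityform02}).

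Next, for $t\geq t_{k-1}$ I would write $\mathbf{x}^r$ in the integral form analogous to (\ref{Statechanggedform}) on each subinterval $[t_{m-1},t_m]$ with $m\geq k$, differentiate under the integral sign with respect to $\hat\sigma^k$ (the interchange being justified by the continuously differentiable dependence of the ODE solution on the parameter, since the right-hand side of (\ref{dynamicstrans-quardratic}) is $C^1$ in the state and parameter), and then differentiate the resulting identity in $t$ to recover an ODE. The linear term contributes $A\Psi^k$, the constant term $P\mathbf{b}$ drops out, and the input term $x_{2N+1}^r\mathbf{a}$ contributes $\mathbf{a}\,\partial x_{2N+1}^r/\partial\hat\sigma^k$ with the explicit formula of Theorem~3.

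The one step requiring care is the differentiation of the nonlinear friction term $B\,\mathbf{x}^r\circ|\mathbf{x}^r|$. The key fact is that the scalar map $x\mapsto x|x|$ is continuously differentiable everywhere, including at $x=0$, with derivative $2|x|$; applying this componentwise gives $\partial(\mathbf{x}^r\circ|\mathbf{x}^r|)/\partial\hat\sigma^k = 2|\mathbf{x}^r|\circ\Psi^k$, which yields the factor $2B|\mathbf{x}^r|\circ\Psi^k$ in (\ref{sensitityform02}). This is the main (though mild) obstacle, as it is the only point where the nonlinearity enters and where the factor of two must be tracked correctly. Assembling the three contributions then produces exactly the sensitivity system (\ref{sensitityform02}) on each $[t_{m-1},t_m)$, with the zero initial condition established above, completing the proof.
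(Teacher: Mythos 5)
Your proposal is correct and takes essentially the same approach as the paper, whose entire ``proof'' of this theorem is the remark that the state variation ``can be computed in the same manner as for Problem P$_N^r$'' --- i.e., exactly the substitution you carry out: repeat the integral-form differentiation behind Theorem~2 with $\hat{\sigma}^k$, $\Psi^k$, and Theorem~3 replacing $\sigma^k$, $\Gamma^k$, and Theorem~1. The extra details you supply (the vanishing of $\Psi^k$ on $[0,t_{k-1})$ because the parameter only enters through $x_{2N+1}^r$, the justification for differentiating under the integral sign, and the $C^1$ derivative $2|x|$ of $x\mapsto x|x|$ giving the factor $2B|\mathbf{x}^r|\circ\Psi^k$) simply make explicit what the paper leaves implicit.
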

Clearly, the gradients of constraints (\ref{inequality constraints2000}) and (\ref{g3cons-quardratic}) can be computed using equations (\ref{BasedTheorem1}) and (\ref{prove03}).
For the penalty function  (\ref{obj-change}), the gradient can be obtained  using the chain rule of differentiation: 
\begin{equation}  \nonumber
\begin{aligned}
\frac{\partial G_{\alpha,\omega}({\bm{\hat{ \sigma}}} )} {{\partial {{ \hat{\sigma}}}^k }} &=2\gamma(x_N (T) - \hat p(L))^{2\gamma-1}\Psi^{k}_{N}( T\left| \mathbf{ {{\hat{\bm \sigma}}}}  \right.) +  \int_0^{T} \bigg  \{ \frac{2\gamma(3N+1)}{3NT} {(x_N (t) - \hat p(L ))^{2\gamma-1 }}\Psi^{k}_N(t\left| \mathbf{ {{\hat{\bm \sigma}}}}  \right.)
 \\& \quad+ \frac{8\gamma}{3NT}\sum\limits_{j = 1}^{{{N } \mathord{\left/ {\vphantom {{(N )} 2}} \right.
\kern-\nulldelimiterspace} 2}} {(x_{2j-1} (t) - \hat p(l_{2j-1} ))^{2\gamma-1 } }\Psi^{k}_{2j-1}(t\left| \mathbf{ {{\hat{\bm \sigma}}}}  \right.)
  \\&\quad + \frac{4\gamma}{3NT}\sum\limits_{j = 1}^{{{N } \mathord{\left/
 {\vphantom {{(N) } 2}} \right.
 \kern-\nulldelimiterspace} 2}-1} {(x_{2j } (t) - \hat p(l_{2j } ))^{2\gamma-1 }}\Psi^{k}_{2j}(t\left| \mathbf{ {{\hat{\bm \sigma}}}}  \right.)  \bigg   \} dt
 \\ &\quad+{\omega}\int_0^T \bigg\{ \frac{d \phi_{\alpha}(x_{2N+1}(t)-u_{\max})}{d y}\frac{\partial x_{2N+1}(t|\mathbf{{{\hat{\bm\sigma}}}})}{\partial \hat{\sigma}^k}-\frac{d\phi_{\alpha}(-x_{2N+1}(t))}{d y}\frac{\partial x_{2N+1}(t|\mathbf{{{\hat{\bm\sigma}}}})}{\partial \hat{\sigma}^k} \bigg\}dt,
 \end{aligned}
\end{equation}
where $\frac{\partial x_{2N+1}(t|\mathbf{{{\hat{\bm\sigma}}}})}{\partial \hat{\sigma}^k}$ is given by the formula in Theorem 3.
By incorporating these gradient formulae  into a nonlinear programming algorithm such as SQP, Problem Q$^r_{N,\alpha,\omega}$ can be solved efficiently.   When $\alpha$ is small and $\omega$ is large, the solution of Problem Q$^r_{N,\alpha,\omega}$ is a good approximation of the solution of Problem Q$^r_{N}$. See the convergence results in \cite{liu2014} for more details.

\section{Numerical Simulations}\label{simuation}

For the numerical  simulations, we consider a stainless steel pipeline of length   $L=20$~meters and diameter  $D=100$~millimeters. The flow density is taken as $\rho= 1000$~kg/m$^3$. Since the Darcy-Weisbach friction factor $f$ for stainless steel pipelines is normally contained in the range $[0.02,0.04]$ (see reference \cite{Zhaoxin2009}), we choose $f=0.03$. Moreover, as in \cite{Yan1986}, we choose $c=1200$~m/s for the wave speed. The reservoir pressure is set at $P=2\times10^5$~Pa, which corresponds to the pressure exerted by a fluid tower  approximately 20 meters high. We assume that the pipeline fluid flow is initially in the steady state with constant velocity  $\bar{v}_0(l)=2~ \textup{m/s}$. 
It then follows from (\ref{system:1}) that
\begin{equation*}
0 =  - \frac{1}{\rho}\frac{{\partial \bar{p}_0(l)}}{{\partial l}} - \frac{{2f  }}{{D}},
 \end{equation*}
and thus
\begin{equation*}
\frac{{\partial \bar{p}_0(l)}}{{\partial l}}=- \frac{{2{\rho}f  }}{{D}}.
\end{equation*}
Integrating for $\bar{p}_0(l)$ yields
\begin{equation*}\label{initial pressure}
\bar{p}_0(l)=P-\frac{2\rho f }{D}l.
\end{equation*}
We choose  $\gamma=2$ in the objective function (\ref{obj}). In our numerical experience, larger values of $\gamma$ have little effect on the results---this is consistent with the observations in reference \cite{atanov1998variational}, which advocates $\gamma=2$ as the best choice. For the control bounds, we set $u_{\max}=2$ and $\dot{u}_{\max}=10$, and for the terminal time, we set $T=10$ seconds. Moreover, we define $\hat{p}(l)=P=2\times 10^5$~Pa as the target pressure profile, since when the valve is completely closed the pressure will be constant across the pipeline (and equal to the reservoir pressure) in steady state.


Our numerical simulation study was carried out within the MATLAB programming environment (version  R2010b) running on a personal computer with
the following configuration: Intel Core i5-2320 3.00GHz CPU, 4.00GB RAM, 64-bit Windows 7 Operating System. Our MATLAB code implements the gradient-based optimization procedure in Figure \ref{SQLlc} by
combining FMINCON with the sensitivity method for gradient computation.

\subsection{Piecewise-Linear Control Parameterization}\label{Piecewise-Linear_1}

Using the piecewise-linear control parameterization method with $r = 10$ subintervals, we solved Problem P$^r_N$ for $N=16, 18, 20, 22, 24$. Our MATLAB program uses the in-built differential equation solver ODE23 to solve the state system (\ref{dynamicstrans}) and the sensitivity systems (\ref{sensitityform01}) and (\ref{sensitityform02}).

The optimal objective function values are given in Table \ref{tab002}. Moreover, the optimal control parameters for $N=24$ are given in Table~\ref{tab2}. According to equation~(\ref{new state1}), the optimal values in Table~\ref{tab2} are the slopes of the optimal piecewise-linear control, which is plotted in Figure~\ref{velocity001}. In comparison, the objective values corresponding to the ``immediate closure'' strategy (in which the valve is closed abruptly) and the ``constant closure rate'' strategy (in which the valve is closed steadily at a constant rate) are $1.4069 \times 10^{15}$ and $7.5321\times 10^4$, respectively---both much higher than the objective values in Table~\ref{tab002}.

\begin{figure}
\centering\includegraphics[scale=0.75]{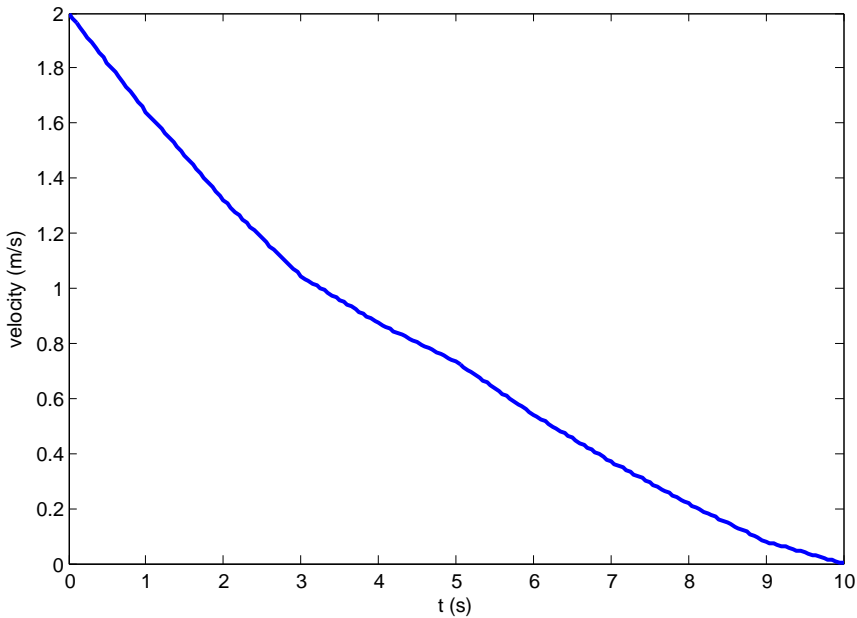}
\caption{Optimal piecewise-linear control for $N=24$}
\label{velocity001}
\end{figure}

\begin{table}
  \centering
  \caption{Optimal objective values  for Section \ref{Piecewise-Linear_1} (piecewise-linear control parameterization)}
  \label{tab002}
  \begin{tabular}{c c c c c c}
   \cmidrule{1-6}
    $N$   & 16    & 18      & 20 & 22    & 24         \\ 
    \cmidrule{1-6}
     $J(\sigma)$  & $64831$ & $60913$  & $37934$      & $29243$ & $25201$         \\
        \cmidrule{1-6}
  \end{tabular}
\end{table}

\begin{table}
  \centering
  \caption{ Optimal control parameters for Section \ref{Piecewise-Linear_1} with $N=24$ (piecewise-linear control parameterization)}
  \label{tab2}
  \begin{tabular}{c c c c c c}
   \cmidrule{1-6}
    $k$   & 1    & 2      & 3 & 4    & 5         \\ 
    \cmidrule{1-6}
     ${{\sigma}} ^k$  & $-0.3582$   & $-0.3214$      & $-0.2771$ & $-0.1700$   & $-0.1405 $      \\
     \cmidrule{1-6}
      $ k$    &6 & 7    & 8      & 9 & 10    \\
       \cmidrule{1-6}
    ${{\sigma}} ^k$    & $-0.1923$ & $-0.1702$  & $-0.1533$    & $-0.3750$      & $-0.0795$   \\
   \cmidrule{1-6}
  \end{tabular}
\end{table}

\subsection{Piecewise-Quadratic Control Parameterization}\label{Piecewise-Quadratic}

We set $\alpha=10^{-6}$ as the smoothing parameter and $\omega=1$ as the penalty parameter. We observed that ODE23 in MATLAB performs poorly in the piecewise-quadratic case. Thus, we changed the code to use ODE15s instead of ODE23 to solve the state and sensitivity systems. To determine good initial values for $\hat{\sigma}^k$, we constructed an initial piecewise-quadratic function (with smooth derivative) to approximate the optimal piecewise-linear control. This piecewise-quadratic function interpolates the optimal piecewise-linear control at the temporal knot points, and their derivatives are equal at the initial time. After constructing the initial piecewise-quadratic control, the corresponding initial values of $\hat{\sigma}^k$ were subsequently obtained. The optimal objective function values for $r=10$ and $N=16,18,20,22,24$ are given in Table~\ref{table003}. The optimal control parameters for $N=24$ are given in Table~\ref{tab4} and the corresponding optimal piecewise-quadratic control is shown in Figure~\ref{compulsivecontrol}. The pressure profiles at the pipeline terminus for the optimal piecewise-quadratic control, the optimal piecewise-linear control, and the constant closure rate control strategy are compared in Figure~\ref{comparison-pressure001}. It is clearly apparent from the figure that the piecewise-quadratic strategy results in the smoothest pressure profile with the least fluctuation. Figure~\ref{control-terminal} gives another comparison between the different control strategies for the pressure profile along the pipeline at the terminal time $t=10$s. Moreover, Figures~\ref{noncontrol1}-\ref{compulsivecontrol1} show the evolution of the pressure profile over the time and space domains, for each of the four control strategies: immediate closure, constant closure rate, optimal piecewise-linear, and optimal piecewise-quadratic. As expected, the pressure profile for the immediate closure strategy is the most volatile.

\begin{table}
  \centering
  \caption{Optimal objective values for Section \ref{Piecewise-Quadratic}  (piecewise-quadratic control parameterization)}
  \label{table003}
  \begin{tabular}{c c c c c c}
   \cmidrule{1-6}
    $N$   &16  & 18    & 20      & 22 & 24            \\ 
    \cmidrule{1-6}
     $J(\hat{\sigma})$  & $15262$   & $13911 $      & $10192 $ & $10190$    & $10187 $      \\
   \cmidrule{1-6}
  \end{tabular}
\end{table}
\begin{table}
  \centering
  \caption{Optimal control parameters for Section~\ref{Piecewise-Quadratic} with $N=24$ (piecewise-quadratic control parameterization)}
  \label{tab4}
  \begin{tabular}{c c c c c c}
   \cmidrule{1-6}
    $k$   & 1    & 2      & 3 & 4    & 5         \\ 
    \cmidrule{1-6}
     ${{\hat{\sigma}}} ^k$  & $-0.0577$   & $-0.1220$      & $-0.0096$ & $0.0551$   & $-0.0614 $      \\
     \cmidrule{1-6}
      $ k$    &6 & 7    & 8      & 9 & 10    \\
       \cmidrule{1-6}
    ${{\hat{\sigma}}} ^k$    & $-0.0172$ & $-0.0191$  & $-0.0032$    & $-0.0324$      & $0.0943$   \\
   \cmidrule{1-6}
  \end{tabular}
\end{table}
\begin{figure}
\centering\includegraphics[scale=0.75]{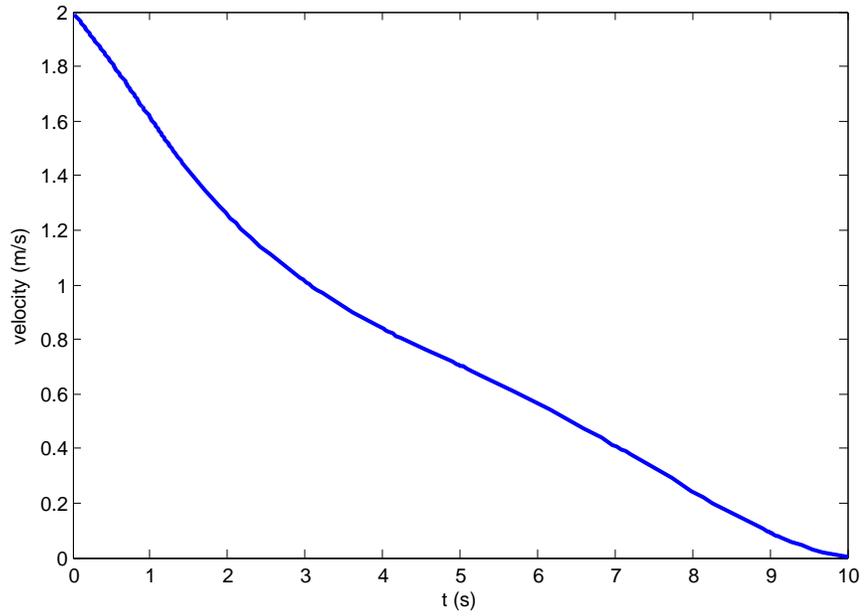}
\caption{Optimal piecewise-quadratic control for $N=24$ }
\label{compulsivecontrol}
\end{figure}
\begin{figure}
\centering\includegraphics[scale=0.75]{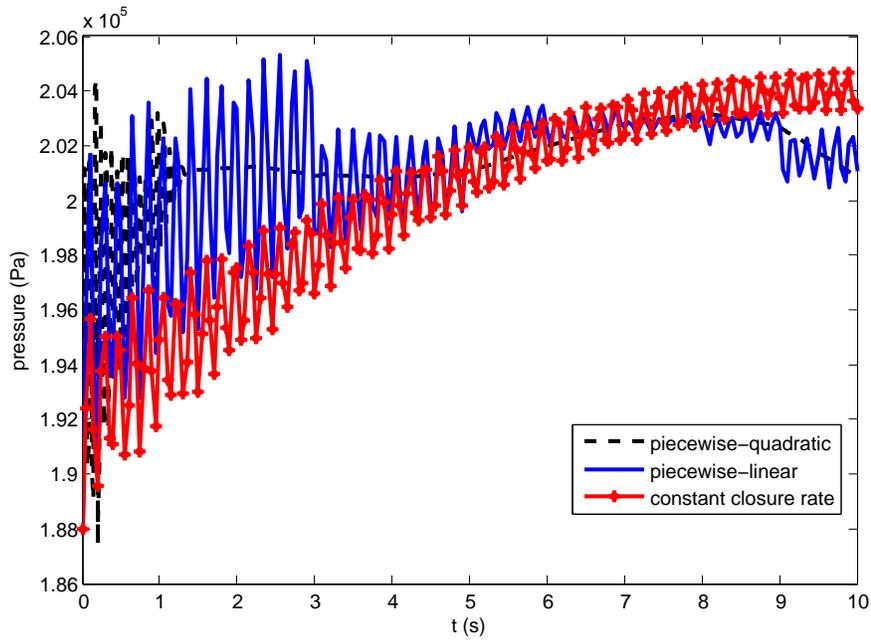}
\caption{Pressure at the pipeline terminus corresponding to the optimal piecewise-quadratic strategy, the optimal piecewise-linear strategy and the constant closure rate strategy (all for $N=24$)}
\label{comparison-pressure001}
\end{figure}


\begin{figure}
\centering\includegraphics[scale=0.75]{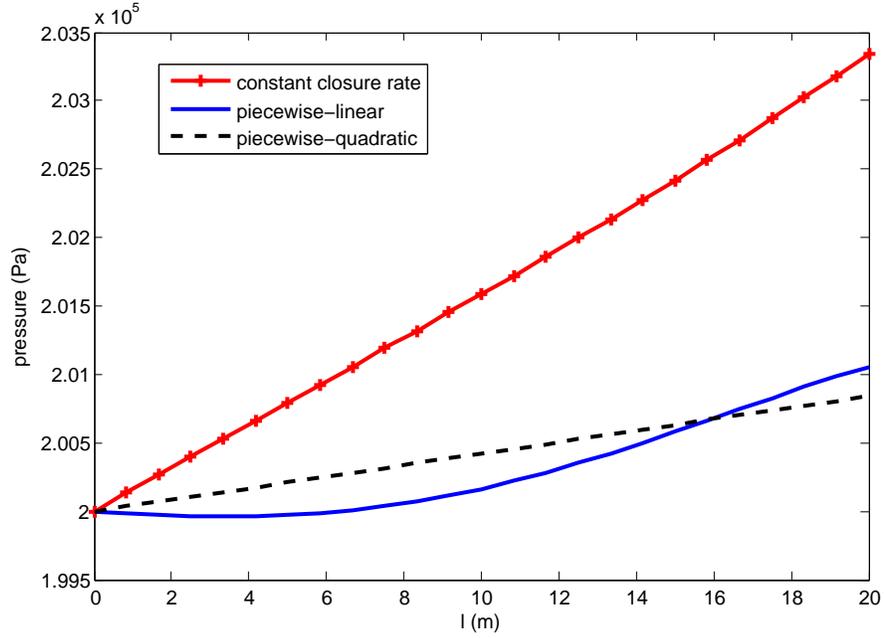}
\caption{Pressure along the pipeline at the terminal time $t=10$s corresponding to the optimal piecewise-quadratic strategy, the optimal piecewise-linear strategy and the constant closure rate strategy (all for $N=24$)}
 \label{control-terminal}
\end{figure}
\begin{figure}
\centering\includegraphics[scale=0.8]{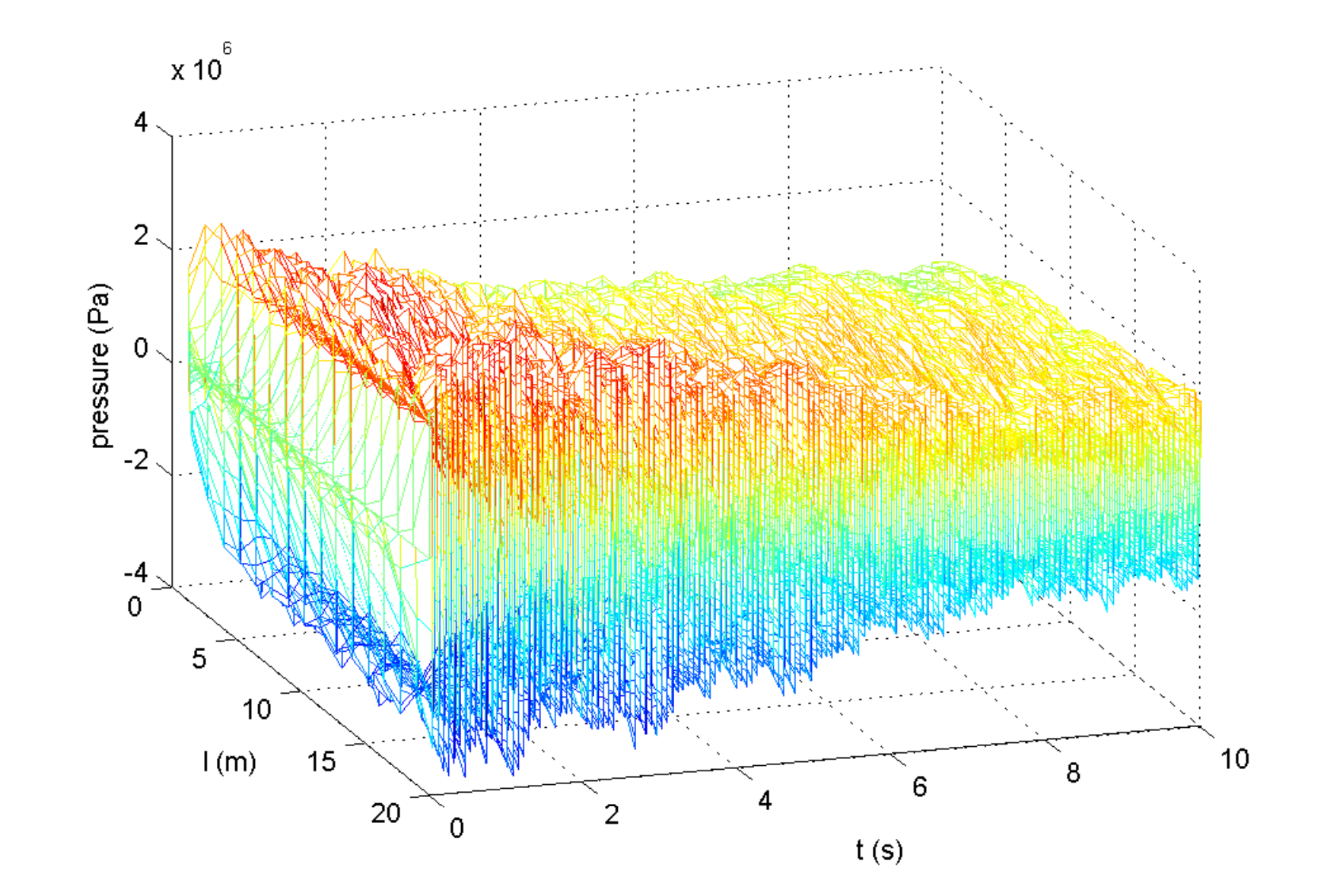}
\caption{Pressure profile corresponding to the immediate closure strategy}
\label{noncontrol1}
\end{figure}
\begin{figure}
\centering\includegraphics[scale=0.8]{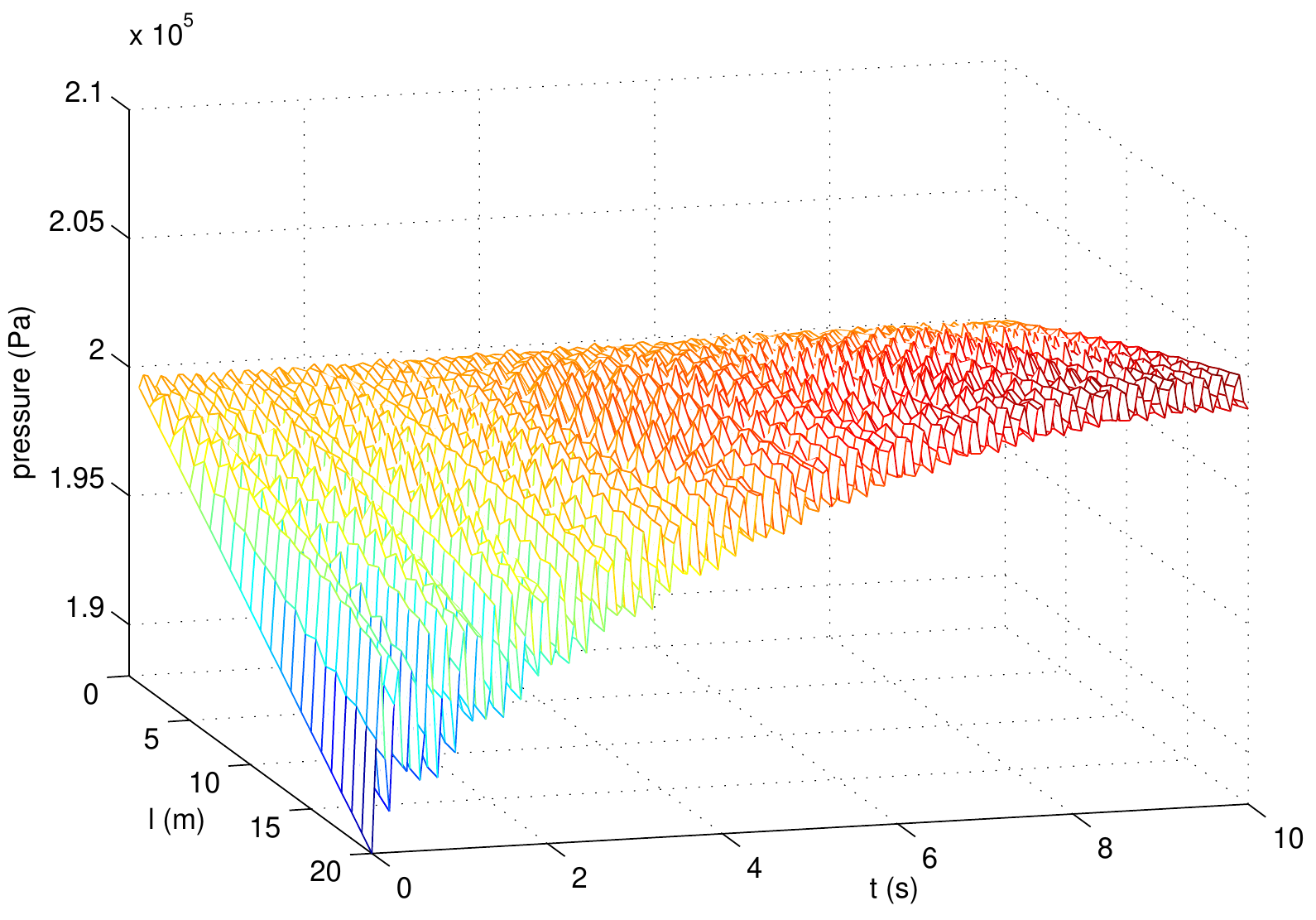}
\caption{Pressure profile corresponding to the constant closure rate strategy}
\label{linearclosure}
\end{figure}
\begin{figure}
\centering\includegraphics[scale=0.8]{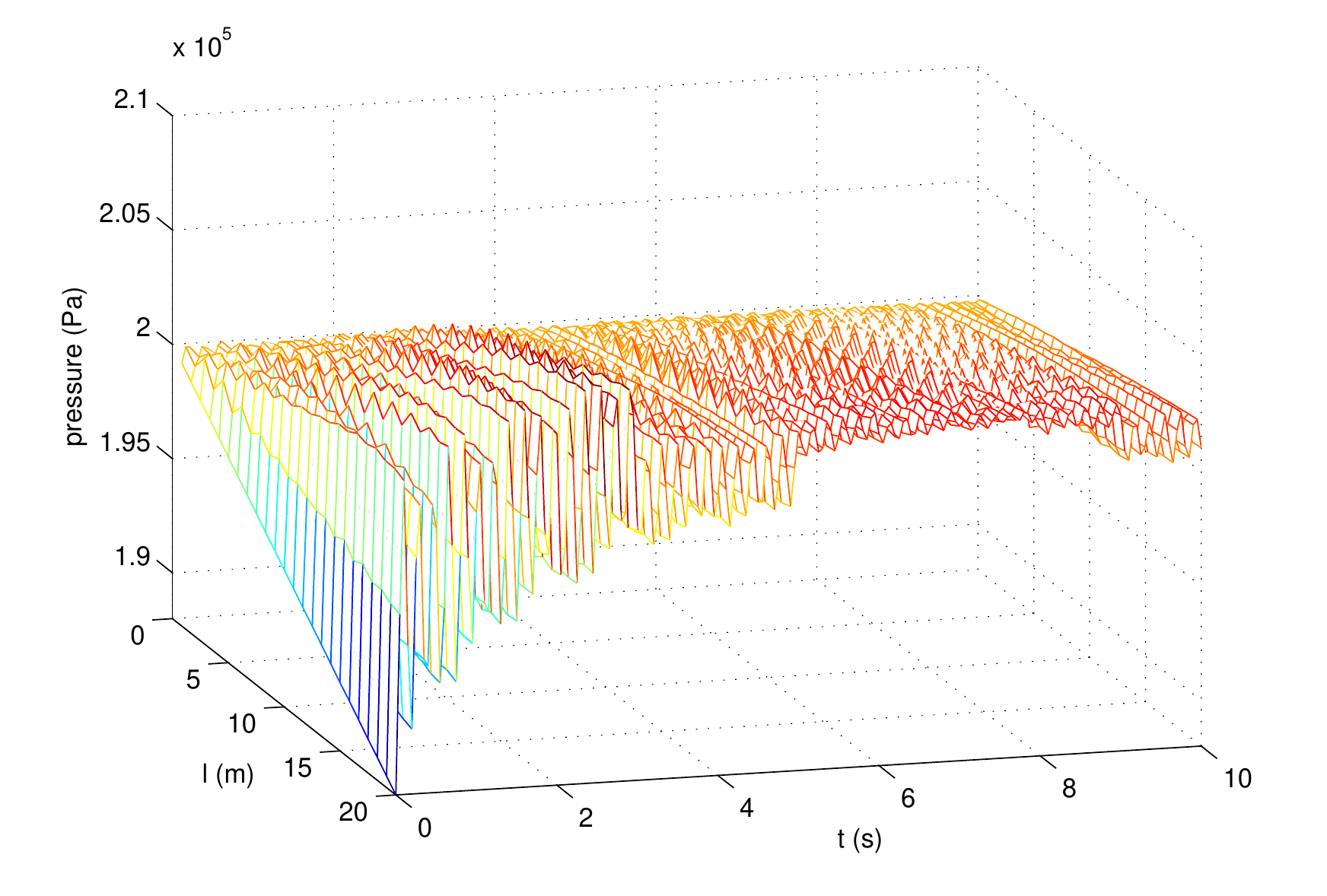}
\caption{Pressure profile corresponding to the optimal piecewise-linear strategy}
\label{transactioncontrol}
\end{figure}
\begin{figure}
\centering\includegraphics[scale=0.8]{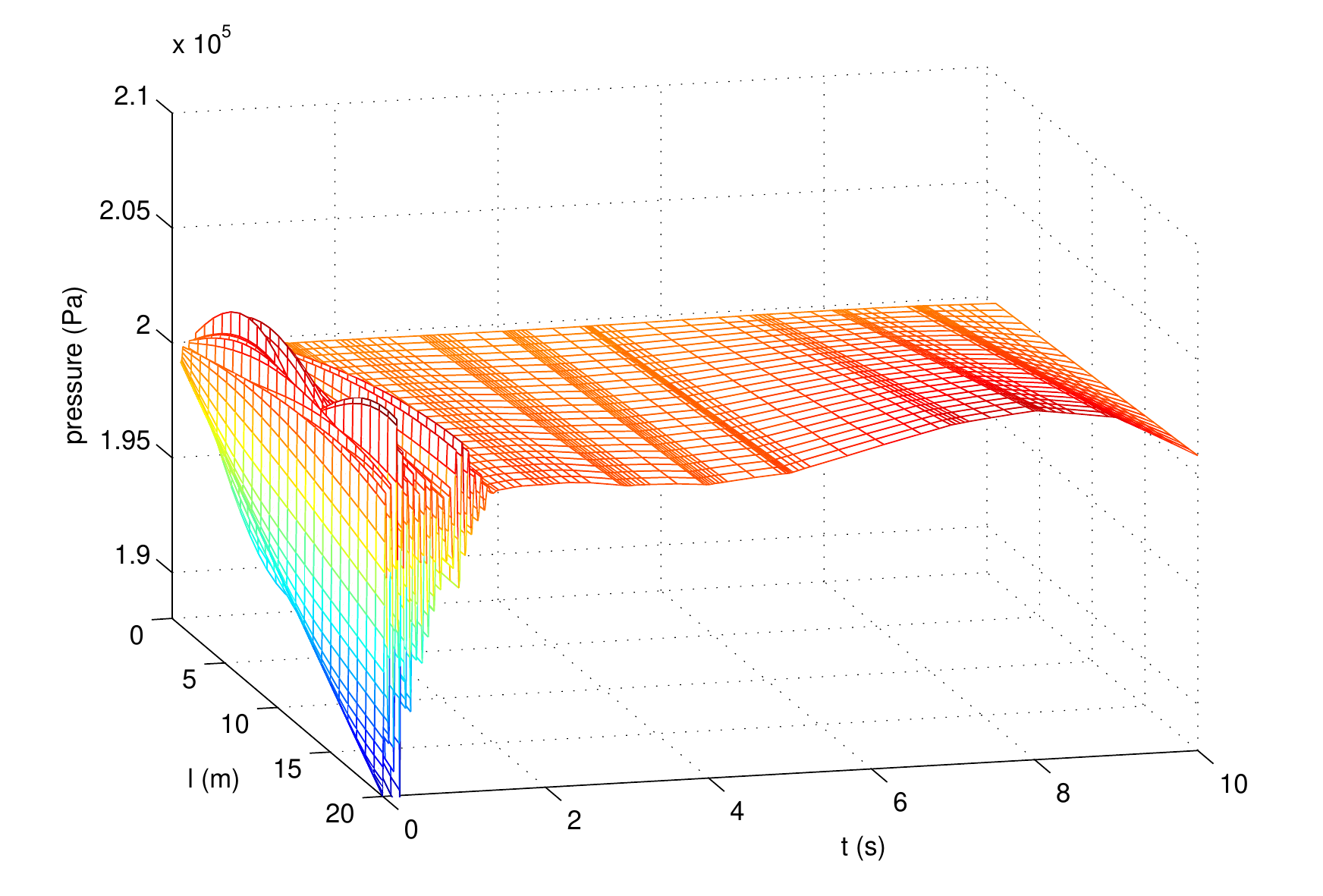}
\caption{Pressure profile corresponding to the optimal piecewise-quadratic strategy}
\label{compulsivecontrol1}
\end{figure}

\section{Conclusions and Future Work}
This paper has presented an effective computational method for solving a finite-time optimal control problem for water hammer suppression during valve closure in fluid pipelines.
The method is based on a combination of the method of lines (for discretizing the fluid flow PDEs)  and the control parameterization method (for discretizing the boundary control function). By applying these two methods in conjunction, the  optimal control
problem is reduced to an optimal parameter selection problem that can be solved using numerical optimization algorithms.
Simulation results demonstrate that this approach is highly effective at mitigating water hammer. Note that our proposed approach involves first discretizing the PDEs to obtain a set of ODEs, and then applying ODE optimal control techniques to determine the optimal valve actuation strategy. An alternative approach would be to apply PDE optimal control techniques directly to the fluid flow model. This will be considered in future work. We also plan to investigate various modifications of the objective function (\ref{obj}). For example, the different terms in  (\ref{obj}) can be  assigned different weights, or the objective can be changed to  track a given velocity profile rather than a pressure profile.
\section*{References}
\bibliographystyle{unsrt}
\bibliography{PlamaControlProblem}

\begin{thebibliography}{10}

\bibitem{Anton1991}
A.~Bergnt, A.~R. Simpson, and E.~Sijamhodzic.
\newblock Water hammer analysis of pumping systems for control of water in
  underground mines.
\newblock In {\em Proceedings of the 4th International Mine Water Association
  Congress}, pages 9--20, 1991.

\bibitem{sciamarella2009water}
D.~Sciamarella and G.~Artana.
\newblock A water hammer analysis of pressure and flow in the voice production
  system.
\newblock {\em Speech Communication}, 51(4):344--351, 2009.

\bibitem{asli2010some}
K.~H. Asli, F.~B.~O. Naghiyev, and A.~K. Haghi.
\newblock Some aspects of physical and numerical modeling of water hammer in
  pipelines.
\newblock {\em Nonlinear Dynamics}, 60(4):677--701, 2010.

\bibitem{erath1999modelling}
W.~Erath, B.~Nowotny, and J.~Maetz.
\newblock Modelling the fluid structure interaction produced by a water hammer
  during shutdown of high-pressure pumps.
\newblock {\em Nuclear Engineering and Design}, 193(3):283--296, 1999.

\bibitem{schmitt2006water}
C.~Schmitt, G.~Pluvinage, E.~Hadj-Taieb, and R.~Akid.
\newblock Water pipeline failure due to water hammer effects.
\newblock {\em Fatigue \& Fracture of Engineering Materials \& Structures},
  29(12):1075--1082, 2006.

\bibitem{jallouf2011probabilistic}
S.~Jallouf, C.~Schmitt, G.~Pluvinage, E.~Hadj-Ta{\"\i}eb, and M.~Lebienvenu.
\newblock A probabilistic safety factor for defect assessment of water pipes
  subjected to water hammer.
\newblock {\em The Journal of Strain Analysis for Engineering Design},
  46(1):14--26, 2011.

\bibitem{Cai1990}
Y.~G. Cai.
\newblock {\em Fluid Dynamics in Pipeline Transport}.
\newblock Zhejiang University Press, Hangzhou, 1990.

\bibitem{saikia2006simulation}
M.~D. Saikia and A.~K. Sarma.
\newblock Simulation of water hammer flows with unsteady friction factor.
\newblock {\em ARPN Journal of Engineering and Applied Sciences}, 1(4):35--40,
  2006.

\bibitem{Pierre2010}
B.~Pierre and J.~S. Gudmundsson.
\newblock Coincidental match of numerical simulation and physics.
\newblock In {\em IOP Conference Series: Earth and Environment}, 2010.

\bibitem{suppressing2013}
Y.~K. Qiu, B.~R. Li, X.~Y. Fu, G.~Yang, and J.~H. Hu.
\newblock Suppressing water hammer of ship steering systems with hydraulic
  accumulator.
\newblock {\em Proceedings of the Institution of Mechanical Engineers: Part E
  -- Journal of Process Mechanical Engineering}, 228(2):136--148, 2014.

\bibitem{lavooij1991fluid}
C.~S.~W. Lavooij and A.~S. Tusseling.
\newblock Fluid-structure interaction in liquid-filled piping systems.
\newblock {\em Journal of Fluids and Structures}, 5(5):573--595, 1991.

\bibitem{vardy1991characteristics}
A.~E. Vardy and K.~L. Hwang.
\newblock A characteristics model of transient friction in pipes.
\newblock {\em Journal of Hydraulic Research}, 29(5):669--684, 1991.

\bibitem{bouaziz2014water}
M.~A. Bouaziz, M.~A. Guidara, C.~Schmitt, E.~Hadj-Ta{\"\i}eb, and Z.~Azari.
\newblock Water hammer effects on a gray cast iron water network after adding
  pumps.
\newblock {\em Engineering Failure Analysis}, 44:1--16, 2014.

\bibitem{blavzivc2014application}
S.~Bla{\v{z}}i{\v{c}}, G.~Geiger, and D.~Matko.
\newblock Application of a heterogenous multiscale method to multi-batch driven
  pipeline.
\newblock {\em Applied Mathematical Modelling}, 38(3):864--877, 2014.

\bibitem{zhao2004godunov}
M.~Zhao and M.~S. Ghidaoui.
\newblock Godunov-type solutions for water hammer flows.
\newblock {\em Journal of Hydraulic Engineering}, 130(4):341--348, 2004.

\bibitem{ghidaoui2005review}
M.~S. Ghidaoui, M.~Zhao, D.~A. Mcinnis, and D.~H. Axworthy.
\newblock A review of water hammer theory and practice.
\newblock {\em Applied Mechanics Reviews}, 58(1):49--76, 2005.

\bibitem{schiesser1991numerical}
W.~E. Schiesser.
\newblock {\em The Numerical Method of Lines: Integration of Partial
  Differential Equations}.
\newblock Academic Press, Waltham, 1991.

\bibitem{schiesser2009compendium}
W.~E. Schiesser and G.~W. Griffiths.
\newblock {\em A Compendium of Partial Differential Equation Models: Method of
  Lines Analysis with Matlab}.
\newblock Cambridge University Press, Cambridge, 2009.

\bibitem{Marian2004}
N.~Marian.
\newblock Model of the water-hammer effect considering a spring safety valve.
\newblock {\em Archives of Hydro-Engineering and Environmental Mechanics},
  51(1):25--40, 2004.

\bibitem{jung2003optimum}
B.~S. Jung and B.~W. Karney.
\newblock Optimum selection of hydraulic devices for water hammer control in
  the pipeline systems using genetic algorithm.
\newblock In {\em Proceedings of the 4th Joint Fluids Summer Engineering
  Conference}, pages 2877--2883, 2003.

\bibitem{cao2008analytic}
H.~Z. Cao, Z.~H. He, and Z.~Y. He.
\newblock The analytic research on the wave process and optimal control of
  water hammer in pipes.
\newblock {\em Engineering Mechanics}, 25(6):22--26, 2008.

\bibitem{axworthy2000valve}
D.~H. Axworthy and B.~W. Karney.
\newblock Valve closure in graph-theoretical models for slow transient network
  analysis.
\newblock {\em Journal of Hydraulic Engineering}, 126(4):304--309, 2000.

\bibitem{tian2008numerical}
W.~X. Tian, G.~Su, G.~P. Wang, S.~Z. Qiu, and Z.~J. Xiao.
\newblock Numerical simulation and optimization on valve-induced water hammer
  characteristics for parallel pump feedwater system.
\newblock {\em Annals of Nuclear Energy}, 35(12):2280--2287, 2008.

\bibitem{wei_min2003research}
W.~M. Feng and X.~X. Zheng.
\newblock Research on optimal control of transient multiple valves regulation
  for fluid transient.
\newblock {\em Journal of Wuhan University of Hydraulic and Electric
  Engineering}, 36(2):130--133, 2003.

\bibitem{Wylie1993}
E.~B. Wylie, V.~L. Streeter, and L.~Suo.
\newblock {\em Fluid Transients in Systems}.
\newblock Prentice Hall, New Jersey, 1993.

\bibitem{ghidaoui2004fundamental}
M.~S. Ghidaoui.
\newblock On the fundamental equations of water hammer.
\newblock {\em Urban Water Journal}, 1(2):71--83, 2004.

\bibitem{blavzivc2004simple}
S.~Bla{\v{z}}i{\v{c}}, D.~Matko, and G.~Geiger.
\newblock Simple model of a multi-batch driven pipeline.
\newblock {\em Mathematics and Computers in Simulation}, 64(6):617--630, 2004.

\bibitem{atanov1998variational}
G.~A. Atanov, E.~G. Evseeva, and P.~A. Work.
\newblock Variational problem of water-level stabilization in open channels.
\newblock {\em Journal of Hydraulic Engineering}, 124(1):50--54, 1998.

\bibitem{ding2006optimal}
Y.~Ding and S.~S. Wang.
\newblock Optimal control of open-channel flow using adjoint sensitivity
  analysis.
\newblock {\em Journal of Hydraulic Engineering}, 132(11):1215--1228, 2006.

\bibitem{gerald2003numerical}
C.~F. Gerald and P.~O. Wheatley.
\newblock {\em Applied Numerical Analysis}.
\newblock Pearson, New York, 2003.

\bibitem{Teo1991}
K.~L. Teo, C.~J. Goh, and K.~H. Wong.
\newblock {\em A Unified Computational Approach to Optimal Control Problems}.
\newblock Longman Scientific and Technical, Essex, 1991.

\bibitem{linsurvey2013}
Q.~Lin, R.~C. Loxton, and K.~L. Teo.
\newblock The control parameterization method for nonlinear optimal control: A
  survey.
\newblock {\em Journal of Industrial and Management Optimization},
  10(1):275--309, 2014.

\bibitem{lin2012optimal}
Q.~Lin, R.~C. Loxton, K.~L. Teo, and Y.~H. Wu.
\newblock Optimal control computation for nonlinear systems with
  state-dependent stopping criteria.
\newblock {\em Automatica}, 48(9):2116--2129, 2012.

\bibitem{loxton2008optimal}
R.~C. Loxton, K.~L. Teo, and V.~Rehbock.
\newblock Optimal control problems with multiple characteristic time points in
  the objective and constraints.
\newblock {\em Automatica}, 44(11):2923--2929, 2008.

\bibitem{kaya2003computational}
C.~Y. Kaya and J.~L. Noakes.
\newblock Computational method for time-optimal switching control.
\newblock {\em Journal of Optimization Theory and Applications}, 117(1):69--92,
  2003.

\bibitem{luenberger2003linear}
D.~G. Luenberger.
\newblock {\em Linear and Nonlinear Programming}.
\newblock Springer, New York, 2003.

\bibitem{Loxton20092250}
R.~C. Loxton, K.~L. Teo, V.~Rehbock, and K.~F.~C. Yiu.
\newblock Optimal control problems with a continuous inequality constraint on
  the state and the control.
\newblock {\em Automatica}, 45(10):2250--2257, 2009.

\bibitem{loxton2012control}
R.~C. Loxton, Q.~Lin, V.~Rehbock, and K.~L. Teo.
\newblock Control parameterization for optimal control problems with continuous
  inequality constraints: New convergence results.
\newblock {\em Numerical Algebra, Control and Optimization}, 2(3):571--599,
  2012.

\bibitem{liu2014}
X.~G. Liu, Y.~Q. Hu, J.~H. Feng, and K.~Liu.
\newblock A novel penalty approach for nonlinear dynamic optimization problems
  with inequality path constraints.
\newblock {\em IEEE Transactions on Automatic Control}, Accepted, 2014.

\bibitem{Zhaoxin2009}
X.~Zhao, X.~Y. Zhang, M.~D. Zhao, and H.~Y. Dong.
\newblock {\em Hydraulics}.
\newblock China Electric Power Press, Beijing, 2009.

\bibitem{Yan1986}
D.~F. Yan.
\newblock {\em Design and Management of Oil Pipeline}.
\newblock Petroleum Industry Press, Beijing, 1986.

\end{thebibliography}
\end{CJK*}
\end{document}